\pdfoutput=1

\documentclass[11pt]{article}

% Standard Packages
\usepackage{amssymb,amsmath,amsthm}	% Mathematische Symbole
\usepackage{a4wide}
\usepackage{graphicx}      										% Grafiken einbinden
\usepackage{tikz}          										% LaTeX-Grafiken erzeugen
\usepackage[numbers]{natbib}
\usepackage{dsfont}
\usepackage{amsmath}
\usepackage{xspace}
\usepackage[T1]{fontenc}%fuer Silbentrennung
\usepackage[utf8x]{inputenc}
\usepackage{graphicx}
\usepackage{caption}
\usepackage{subcaption}

\newcommand{\p}[2]{\ensuremath{\frac{\partial #1}{\partial #2 }}}

% Eigene Befehle
\newcommand{\absatz}{\\[12pt]}           % Absatz
\newcommand{\klabsatz}{\\[8pt]}          % kleiner Absatz
\newtheorem{definition}{Definition}[]
\newtheorem{lemma}{Lemma}[]
\newtheorem{theorem}{Theorem}[]
\newtheorem{remark}{Remark}[]

% Settings
%\renewcommand*\familydefault{\sfdefault}

\newcommand{\R}{\ensuremath{\mathbb{R}}}

\newcommand{\N}{\ensuremath{\mathbb{N}}}
\DeclareMathOperator{\Unif}{Unif}
%\pagestyle{empty}

%-----------------------------------------------------------%
\begin{document}
	
	%-- TITEL ----------------------------------------------------------%
	\title{Mean Field Limit of a Behavioral Financial Market Model}
	
	\author{Torsten Trimborn\footnote{MathCCES, RWTH Aachen, 52056 Aachen, Germany} \footnote{Corresponding author: trimborn@mathcces.rwth-aachen.de}, Martin Frank\footnote{Karlsruhe Institute of Technology, Steinbuch Center for Computing, Hermann-von-Helmholtz-Platz 1, 76344 Eggenstein-Leopoldshafen, Germany}, Stephan Martin\footnote{Im Hainzenthal 27, 67722 Winnweiler, Germany} }

	\maketitle
	
\begin{abstract}
In the past decade there has been a growing interest in agent-based econophysical financial market models.
The goal of these models is to gain further insights into stylized facts of financial data. We derive the mean field limit of the econophysical Cross model \cite{cross2005threshold} and show that the kinetic limit is a good approximation of the original model. Our kinetic model is able to replicate some of the most prominent stylized facts, namely fat-tails of asset returns, uncorrelated stock price returns and volatility clustering. Interestingly, psychological misperceptions of investors can be accounted to be the origin of the appearance of stylized facts. The mesoscopic model allows us to study the model analytically. We derive steady state solutions and entropy bounds of the deterministic skeleton. These first analytical results already guide us to explanations for the complex dynamics of the model. \\
{\textbf{Keywords:} mean field limit, stock market, kinetic model, agent-based models, behavioral finance, stylized facts} 
\end{abstract}

\section{Introduction}
%Intro general/ Motivation
In the past years, there has been a number of financial crises (Black Monday 1987, Dot-com Bubble 2000, Global Financial Crisis 2007). Unfortunately, these crises all have in common that classical financial market models fail to explain their origin and existence \cite{colander2009financial, farmer2009economy}. Additionally, these models fail to explain the existence of \textit{stylized facts}, which are assumed to be one important aspect to the creation of market crashes \cite{lebaron2006agentstyl}. Stylized facts are statistical properties of financial data observable all over the world \cite{cont2001empirical}. The most prominent examples are \textit{fat-tails} in asset returns and \textit{volatility clustering} \cite{cont2007volatility, bouchaud2001power}. In physics, stylized facts might be viewed as scaling laws \cite{lux2008stochastic}, which is the reason why physicists became more and more interested in economic models \cite{stanley1999econophysics}.\klabsatz
 This has lead to the new field of research called econophysics which can be traced back to the Dow Jones crash (Black Monday) in 1987. Generally speaking, physicists and economists apply physical theories such as kinetic theory, mean field theory or percolation theory to economic issues. One tool of econophysics are so called agent-based financial market models. Many researchers believe that these models help to gain more insights into financial markets \cite{farmer2009economy, lux2008stochastic}. These modern models of financial markets consist of many interacting agents which are studied with the help of Monte Carlo simulations \cite{lebaron2006agent}.\klabsatz
These models do not consider rational financial agents, often called \textit{homo oeconomicus}, which have been considered in the classical financial market models. They rather consider so called bounded rational agents in the sense of Simon \cite{simon1972theories} and are often inspired by the prospect theory founded by Kahnemann and Tversky \cite{kahneman1979prospect}. These modern financial market models can reproduce stylized facts and they seem to indicate that psychological misperceptions of agents are one reason for their appearance. However, until now the origin of stylized facts is not completely understood \cite{lux2008stochastic}.\klabsatz
Time continuous, in particular kinetic partial differential equations (PDEs), can help to understand the connection between the microscopic modeling of investors (agents) and the existence of stylized facts. One reason is the possibility to study the long time behavior of PDE models. This can be done by studying the steady state solutions of the PDE model. 
In the last decade, there have been several attemps from the physical and mathematical community to translate financial market models into time continuous PDE models. Examples are \cite{maldarella2012kinetic, cordier2009mesoscopic} and more recently \cite{Trimborn}. There are many mathematical methods to translate microscopic ordinary differential equations (ODEs) into  PDEs. A popular approach in economic applications is the kinetic Boltzmann method, mainly advanced by Toscani and Pareschi \cite{pareschi2013interacting}. This ansatz is mathematically well understood and has been applied to many different applications in life sciences and social sciences \cite{pareschi2013interacting}. In this paper, we follow a closely related approach. Instead of considering the kinetic Boltzmann description, we perform the mean field limit. The mean field limit is one of the classical kinetic limits as well, and describes the limit of infinitely many microscopic agents. The reason for that choice is that the microscopic coupling of agents in financial market models is induced by averaging of the agents' investment decisions. Hence, no binary interactions among agents, as considered in the kinetic Boltzmann approach, but rather a force field induced by the actions of all investors drives the microscopic dynamics. \klabsatz
The goal of this work is to derive the mean field limit of a microscopic econophysical financial market model. We show that the mesoscopic model is a good approximation of the original agent-based model.  
We have chosen a microscopic econophysical model which considers behavioral aspects of investors and reveals that they are the reasons for the existence of stylized facts. Thus, the starting point of our investigations is an agent-based model published by Cross et al. \citep{cross2005threshold} in 2005.
In Monte Carlo simulations, this model can reproduce the most prominent \textit{stylized facts} of financial data, namely: \textit{fat-tails in stock price return data, uncorrelated price returns and volatility clustering}. The benefit of this model is that it shows, by means of computer simulation, that the psychological herding pressure of investors causes the appearance of fat-tails . 
In absence of the herding pressure, the stock price behavior is characterized by a Gaussian return distribution. The financial agents are modeled as bounded rational agents in the sense of Simon \cite{simon1972theories}.
Each financial agent is described by his investment decision on the stock market, meaning if they are in a short position (sell stocks) or long position (buy stocks).
Since each agent is characterized through two possible orientations, there is an obvious connection to the Ising spin model \citep{ising1925beitrag} known in statistical physics. 
It is important to emphasize that this model follows a bottom up approach \citep{cross2005threshold} and develops the aggregated stock price behavior from reasonable microscopic interactions. These microscopic interactions can be interpreted as a simple trading strategy of investors. Nevertheless, we still consider a basic model. 
The model is simple in the sense that there are no binary interactions between agents and the agents have no learning ability \citep{cross2005threshold}. Furthermore, we want to underline that the goal of Cross et al. is not to predict prices or understand how to fit historic prices best \citep{cross2005threshold}, but rather gain insights into the creation of stylized facts. 
We also want to mention that there has been an earlier attempt to derive a mean field model of the original model \citep{cross2006mean}. In fact, the mean field model \citep{cross2006mean} follows a different philosophy and does not use a bottom up approach. In particular, the model \citep{cross2006mean} is substantially different compared to the original model  \citep{cross2005threshold}.
Before we can derive the mean field limit of the model of Cross et al., we need to ensure that there are no finite size effects regarding the number of agents.
Earlier studies \cite{egenter1999finite, zschischang2001some, kohl1997influence, hellthaler1996influence} and recently by Otte et al. \cite{SABCEMM} show that many agent-based econophysical models have finite size effects. Pleasingly, this is not the case in the original Cross model, as the simulations of Otte et al. \cite{SABCEMM} revealed. In this study, simulations with up to five million agents of the original Cross model have been conducted. For further simulation results we refer to the original papers \citep{cross2005threshold, cross2007stylized, lamba2008market} and the recently introduced SABCEMM tool \cite{SABCEMM}. \klabsatz
The result of the kinetic limit is a system of PDEs coupled with a stochastic differential equation (SDE). 
The SDE defines the time evolution of the market price, whereas the PDEs governs the investment decision of agents. We derive the space-homogeneous PDE-SDE system and the space-heterogeneous PDE-SDE system. The former corresponds to the \emph{rational} agent model with no herding, whereas the latter takes herding into account. In fact, the space-homogeneous model generates Gaussian stock price data while the heterogeneous model can create fat-tails in asset returns. Thus, our mesoscopic model exhibits the same characteristics and can reproduce  qualitatively the same stylized facts as the original model \cite{cross2005threshold}. Additionally, to its economic relevance this PDE-SDE system is already interesting for pure mathematical considerations. This model is very similar to models of animal aggregation originally introduced by Eftimie et al.  \cite{eftimie2007modeling}. Furthermore, our model is closely related to \emph{structured population dynamics} as discussed in \cite{michel2005general, perthame2005general}. This type of kinetic model has been probably first introduced by Kac \cite{kac1974stochastic}. In addition, there is also an obvious similarity to the famous Goldstein-Taylor model \cite{goldstein1951diffusion, taylor1922diffusion} .
\klabsatz
%overview of Paper
The structure of the paper is as follows: First, we introduce the econophysical model of Cross et al. \citep{cross2005threshold} which we denote \emph{Cross model}. 
We then introduce a microscopic approximation of the original Cross model which we call \emph{kinetic particle model}. In section 4, we derive the time continuous space-homogeneous and space-heterogeneous \emph{mean field Cross model}. Throughout the paper we give numerical simulations of the different models. In section 5, we extensively study the mean field Cross model numerically. In addition, we show that the mean field Cross model is qualitatively identical to the original Cross model. In section 6, we present a qualitative study of the mesoscopic model. We finish the paper with a short discussion of this work and a presentation of further research directions. \klabsatz

\section{The original Model}
In this section, we provide a brief definition of the original Cross model \citep{cross2005threshold}. 
There is a fixed number of $N\in\N$ agents. Each agent has to decide in each time step whether he wants to be long or short in the market, meaning if he wants so buy or sell stocks. Thus, the investment propensity $\gamma_i$ of each agent switches between a buy position $\gamma_i= 1$ and a sell position $\gamma_i=-1$. The excess demand function at time $t\in [0,\infty)$ is defined as the average of all investment decisions $\gamma_i$.
\begin{align}
ED_N(t):= \frac{1}{N}\sum\limits_{i=1}^N\gamma_i(t). \label{ED}
\end{align}
Hence, $ED_N$ measures the fraction of long respectively short investors. Furthermore, the model introduces two \textit{psychological} pressures, the \textit{herding pressure} and the \textit{inaction pressure}, which control the switching mechanism of investment decisions. The \textit{inaction pressure} is defined by the interval
\[
I_i=\left[ \frac{m_i}{1+\alpha_i}, m_i\ (1+\alpha_i)\right],
\] 
where $m_i$ denotes the stock price of the last switch of agent $i$ and $\alpha_i>0$ is the so called \textit{inaction threshold}. The investor switches position if the current stock price $S(t)>0$ leaves the interval $I_i$. This trading strategy can be interpreted as an agent \textit{taking his profits} or \textit{cutting his losses}. The model is discrete in time with fixed increments of time $\Delta t>0$.\\ 
The \textit{herding pressure} is given by:
\begin{align}
\begin{cases} c_i(t+\Delta t)= c_i(t)+ \Delta t\ |ED_N(t)|,& \text{if}\ \gamma_i(t)\ ED_N(t)<0,\\ \label{uh}
			c_i(t+\Delta t)=c_i(t),& \text{otherwise}.
			 \end{cases}
\end{align} 
Thus, the herding pressure is increased if the financial agent is in the minority position. 
The switch is induced if the herding pressure exceeds the \textit{herding threshold} $\beta_i$. The thresholds $\alpha_i, \beta_i$ are drawn from uniformly, independently and identically distributed random variables. 
\begin{align*}
\alpha_i\sim \Unif(A_1,A_2), \ A_2>A_1>0,\\
\beta_i\sim \Unif(B_1,B_2), \ B_2>B_1>0.
\end{align*}
We assume that $\alpha_i$ and  $\beta_i$ are uncorrelated and fixed after the initial choice. The constants
$B_1$ and $B_2$ have to scale with time, since they correspond to the time units an investor can resist the herding pressure.
\begin{align*}
& B_1:= b_1\cdot \Delta t,\ b_1>0,\\
&B_2:= b_2\cdot \Delta t,\ b_2>0.
\end{align*}
In summary, the switching mechanism can be described as follows.\\
 The switch is induced if 
	\[
	c_i>\beta_i\ \text{or}\  S(t)\notin I_i.
	\]
After each switch the \textit{herding pressure} gets reset to zero and the memory variable $m_i$ gets updated to the current stock price.\\ 
The stock price is then driven by the excess demand:
\begin{align}
&S(t+\Delta t)=S(t)\ \exp\left\{ \left( 1+\theta\ |ED_N(t)| \right)\ \left(\sqrt{\Delta t}\ \eta-\frac{\Delta t}{2}\right)+\kappa\ \Delta t\ \frac{\Delta ED_N(t)}{\Delta t} \right\},\\ \label{op}
& \eta\sim \mathcal{N}(0,1),\\
&\Delta ED_N(t):=\frac{1}{N}\sum\limits_{i=1}^N \gamma_i(t)-\frac{1}{N}\sum\limits_{i=1}^N \gamma_i(t-\Delta t),
\end{align}
where the constant $\kappa>0$ is known as market depth measuring the impact of a change in excess demand on the stock price. The \emph{heteroskedasticity parameter} $\theta\geq 0$ models the impact of the excess demand on the random external market information modeled by the Gaussian random variable. The reason for that choice is that \textit{``periods of extreme market volatility often coincide with periods of extreme"} \cite{cross2005threshold} excess demand. We refer to the original papers \citep{cross2005threshold, cross2007stylized, lamba2008market} for further modeling details.

\subsection{Microscopic Simulations}
In this section, we shortly present the outcome of simulations of the original Cross model. 
We investigate the most prominent stylized facts of financial data, namely fat-tails of asset returns, uncorrelated stock price returns and volatility clustering.
A fat-tailed distribution is characterized by an algebraic decay of the tails of the distribution. This can be well illustrated by a qq-plot, where the data is fitted against a Gaussian distribution. 
Uncorrelated price returns and volatility clustering can be deduced from the auto-correlation function of stock price returns. The former corresponds to an auto-correlation of zero and the latter to a slow decaying positive correlation of absolute log-returns for increasing time lags.\klabsatz  
If only the inaction pressure is active (blue graph in figure \ref{firstOriginal}), we obtain Gaussian behavior of the stock price since the excess demand is approximately zero. In fact, this trading rule can be regarded as in some sense \textit{rational}. If the herding pressure is added, the stock price behavior rapidly changes and we obtain non-Gaussian return distributions (green graph in figure \ref{firstOriginal}).
\begin{figure}[h!]
\begin{center}
\includegraphics[width=0.9\textwidth]{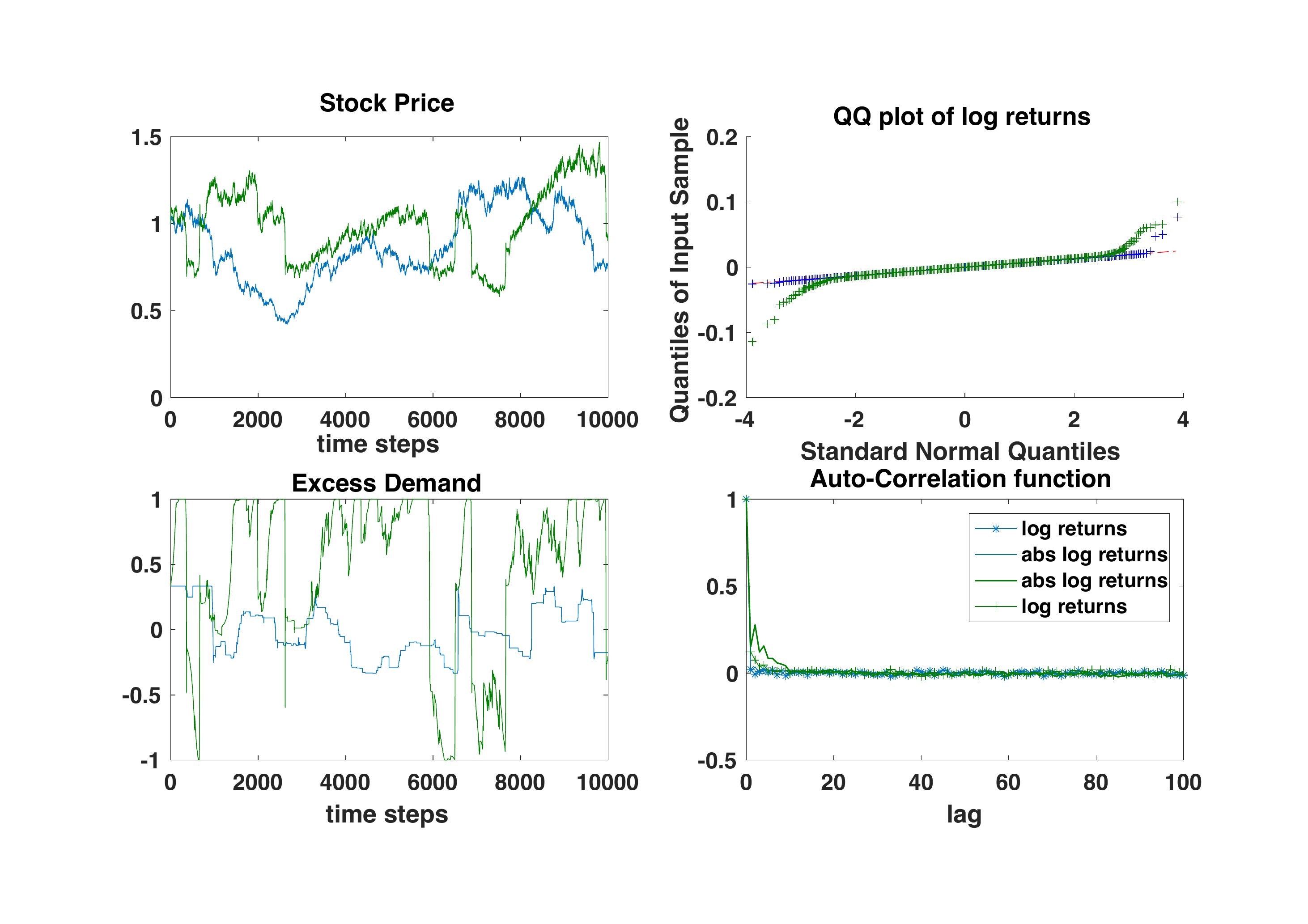}
\caption{Simulations of the original Cross model. The blue graph represents simulations only conducted with the inaction pressure, whereas the model output conducted with the inaction and herding pressure are colored green. The heteroskedasticity parameter is set to  $\theta=2$ and we refer to table \ref{ParCrossO} for further details.  }\label{firstOriginal}
\end{center}
\end{figure}
In both cases we observe uncorrelated raw price returns which can be deduced from the auto-correlation plot in figure \ref{firstOriginal}.
Furthermore, there is only a minor correlation in the case where both pressures are active. Figure \ref{secondOriginal} reveals, adding a dependence on the excess demand to the diffusion $(\theta=2)$, that we obtain volatility clustering. 
\begin{figure}[h!]
\begin{center}
\includegraphics[width=0.8\textwidth]{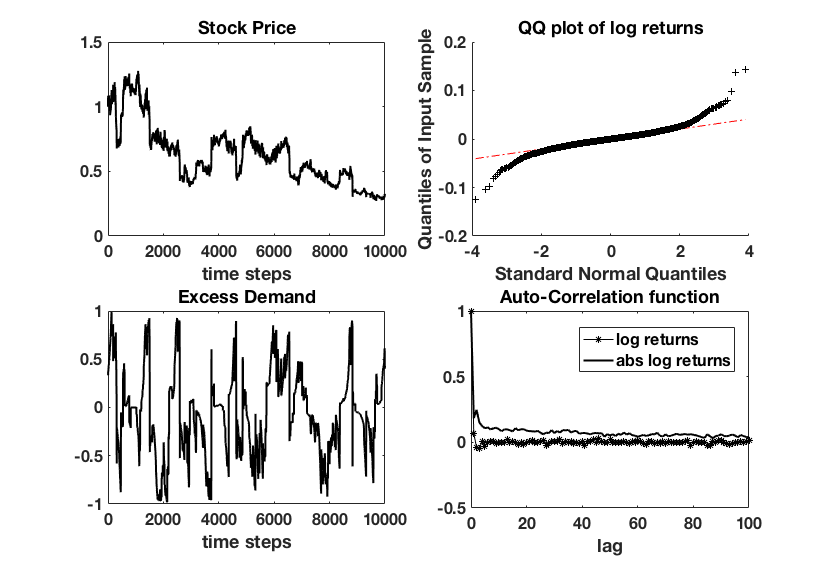}
\caption{Simulations of the original Cross model with inaction and herding pressure. The heteroskedasticity parameter is set to $\theta=2$. For further choices of parameters we refer to table \ref{ParCrossO}.}\label{secondOriginal}
\end{center}
\end{figure}
In agreement with earlier studies \cite{cross2005threshold, cross2007stylized}, we obtain no qualitative change in the simulations in cases where the herding thresholds are correlated or the thresholds get resampled after each switch.  For these reasons, we have used the simplest, previously introduced setting. \\
In summary, we want to record that the model with only active inaction pressure and constant diffusion function produces Gaussian stock return behavior. In comparison to that, adding the herding pressure changes the price behavior and we obtain fat-tails in the stock return. Furthermore, adding an excess demand depending diffusion function, results in volatility clustering.

\section{Kinetic Particle Model}
As pointed out previously, our goal is to derive a mesoscopic description of the agents' dynamics. From a mathematical perspective, the original Cross model is a highly non-linear dynamical system. In order to derive a kinetic PDE model, we need to consider the continuum limit $\Delta t\to 0$ and the mean field limit $N\to \infty$. The mean field limit, is well known in statistical physics and is concerned with the description of large particle systems by probabilistic quantities. Famous physical examples are the Ising model \cite{ising1925beitrag} or the Vlasov \cite{vlasov1938oscillation} equation.
Before we derive the kinetic model, we first derive a particle game. Secondly, we translate the particle game into a PDE. In fact, our financial agents are described by three quantities; the market position $\gamma_i\in\{-1,1\}$, the herding pressure $c_i\geq 0$ and the memory variable $m_i\geq 0$. The two quantities, the herding pressure and the memory variable, can take continuous values whereas the market position is discrete. Consequently, it is reasonable to divide the population into two groups, the agents holding a long position and the agents holding a short position. In a next step, we want to derive a switching probability of each agent to change his market position during a fixed time interval. This means that we want to neglect any dependencies of each agent on his personal past action. Hence, each agent rolls the dice at each time step and the switching probability of the agent only depends on the position of the agent in the $(m,c)$ space and the external stock price. 
This simplification is crucial in order to derive a kinetic PDE system. First, we aim to derive the switching probability based on the herding pressure denoted by $p(c)\in[0,1],\ c\in\R$. In fact, the herding thresholds are all realizations of a uniformly distributed random variable on $[B_1,B_2]$. Consequently, the probability for a switch is simply given by the cumulative distribution function of the random variable $\beta\sim \Unif(B_1,B_2)$ .  
\[
p(c):=P(\beta\leq c) = \int\limits_{\infty}^c \frac{x-B_1}{B_2-B_1}\ dx=\begin{cases}
  0,& c<B_1,\\
  \frac{c-B_1}{B_2-B_1}, &c\in [B_1,B_2],\\
  1,& c>B_2. 
  \end{cases}  
\]   
Equivalently, we define two random variables $$\psi \sim \Unif(M_1(m),M_2(m)),\ M_1(m):=\frac{m}{1+A_2},\ M_2(m):=\frac{m}{1+A_1}$$ and $$\eta\sim \Unif(M_3(m),M_4(m)),\ M_3(m):= m\ (1+A_1),\ M_4(m):=m\ (1+A_2)$$ for an arbitrary but fixed $m>0$. Notice that $M_1<M_2<M_3<M_4$ holds and consequently the probability of a switch for a given stock price $S>0$ and memory $m>0$
can be modeled by:
\begin{align*}
q(m,S):= 1-P(\psi\leq S)+P(\eta\leq S)=\begin{cases}
  1,& S<M_1(m),\\
  1-\frac{S-M_1(m)}{M_2(m)-M_1(m)}, &S\in [M_1(m),M_2(m)],\\
  0,& S\in (M_2(m),M_3(m)),\\
  \frac{S-M_3(m)}{M_4(m)-M_3(m)},& S\in [M_3(m),M_4(m)],\\
  1, & S> M_4(m).
  \end{cases}  
\end{align*} 
We define the switching probability to be the linear combination of these two probabilities. 
\[
\lambda_P(t,c,m,S):= \lambda_1\ p(c)+\lambda_2\ q(t,m,S),\ \lambda_1,\lambda_2>0,\ \text{with}\ \lambda_1+\lambda_2=1.
\]
This choice has been made partially for simplicity and as a result of simulations that indicate a good performance of this choice. We want to summarize the new kinetic particle model. \absatz
Each agent is described by the three properties $(\gamma_i,c_i,m_i)$. The excess demand is simply the average of the investment propensities \eqref{ED} and the time evolution of the herding pressure is given by \eqref{uh}.
At each time step $t_k:= k\ \Delta t,\ k\in\N$, the agent switches his market position with the probability $\lambda_P(t_k,c_i,m_i,S)$. The memory variable $m_i $ and the herding pressure $c_i$ get updated after a switch as in the original model. The time evolution of the stock price equation is given by
 \begin{align}
  S(t_{k+1})= S(t_k)+ \Delta t\ \kappa\  \frac{\Delta ED_N}{\Delta t}\ S(t_k)+ \sqrt{\Delta t}\ (1+\theta\ |ED_N|)\ S(t_k)\ \eta.\label{eulermaju}
 \end{align}
 The pricing equation \eqref{op} of the original Cross model approximates the underlying time continuous model by an explicit exponential integrator. 
 We approximate the time continuous SDE by an Euler-Maruyama discretization. \absatz 
In some sense, the derived switching probabilities can already be seen as the mean field limit of our system. In fact, our approximation is only good if we consider a sufficiently large number of agents. Thus this kinetic particle model should be seen as a realization of a random process of interacting agents. It is worthwhile to notice that this kinetic approximation leads to a noticeable reduction of computational costs in comparison to the original Cross model. The model outputs are qualitatively identical and both models compute the same quantities. The simulations have been conducted on the same machine. In our MATLAB implementation, we obtain a speedup of the factor $72$.

\subsection{Microscopic Simulations of Kinetic Particle Model}
In this section, we want to show that the kinetic particle model must be regarded a good approximation of 
the original Cross model at least on a qualitative level. Remember that we have introduced a switching rate for the investment decisions of agents and changed the pricing formula.\\
As in our previous microscopic simulations, we see a Gaussian behavior of the stock return distribution
in the pure inaction case (blue graph in figure \ref{KineticParticlefirst}).  When adding the herding pressure (green graph in figure \ref{KineticParticlefirst}) the behavior of the price rapidly changes and we obtain fat-tails in the price return distribution. 
\begin{figure}[h!]
\begin{center}
\includegraphics[width=0.9\textwidth]{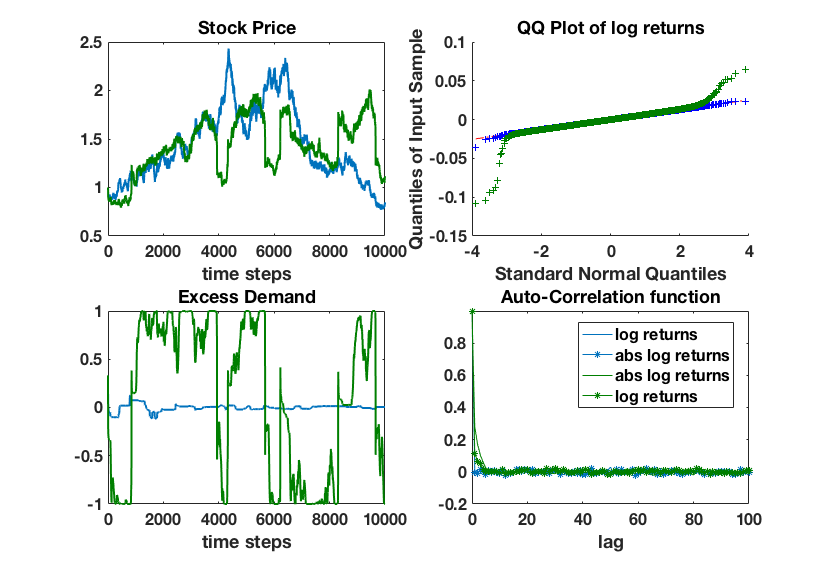}
\caption{Simulations of the kinetic particle model with inaction and herding pressure (green graph) and only inaction pressure (blue graph).
We have set the heteroskedasticity parameter $\theta =0$, for further details we refer to table \ref{ParKin}. }\label{KineticParticlefirst}
\end{center}
\end{figure}
These results coincide with the findings in the original Cross model. We obtain that the additional psychological herding effect forms jumps in the price process, respectively oscillations in the excess demand. \\
In the next simulation, see figure \ref{KineticParticlesecond}, we consider a positive heteroskedasticity parameter, setting $\theta=2$. As in the original Cross model, we obtain additional volatility clustering which can be deduced from the auto-correlation plot in figure \ref{KineticParticlesecond}.
\begin{figure}[h!]
\begin{center}
\includegraphics[width=0.9\textwidth]{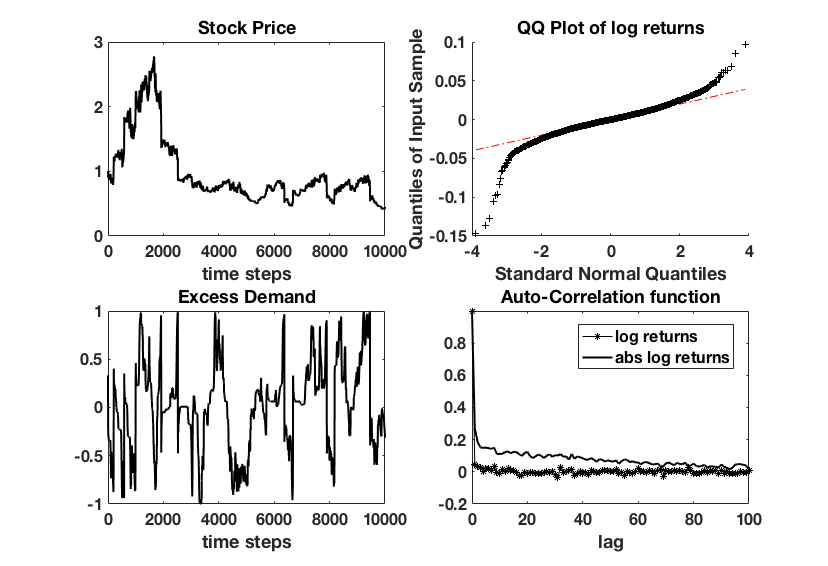}
\caption{Simulations of the kinetic particle model with inaction and herding pressure (green graph) and only inaction pressure (blue graph). We have set the heteroskedasticity parameter $\theta =2$, for further details we refer to table \ref{ParKin}.}\label{KineticParticlesecond}
\end{center}
\end{figure}

\section{Kinetic Model}
We introduce two groups of agents. One investor group is long $\gamma_i=1$ and the other short $\gamma_i=-1$ in the market. Hence, we consider the two densities $f^+(t,m,c)$ and $f^-(t,m,c)$.  
The variable $m\in\R$ we call memory variable which considers the stock price of the last switch. The variable $c\in\R$ is the herding pressure which is increased if the personal market position is in the opposite direction of the excess demand.
We have not chosen the half space $m,c\in \R_{\geq} 0$ to avoid non-standard boundary conditions at zero.  
Due to our choice, we can pose simple Dirichlet boundary conditions. Furthermore, we assume for a moment that the stock price $S(t)$ is externally provided. 
The time evolution of the densities is described by two phenomena:

\paragraph{Transport} There is an advection of the herding pressure which is proportional to the excess demand if the agent's decision contradicts the average opinion. Mathematically, this can be modeled by:
 \begin{align*}
&\partial_t f^+(t,m,c)+\partial_c\left(  H(-ED[f^+,f^-](t))\ f^+(t,m,c) \right)=0,\\
&\partial_t f^-(t,m,c)+\partial_c\left(H(ED[f^+,f^-](t))\  f^-(t,m,c) \right)= 0, 
\end{align*}
with 
\[
ED[f^+,f^-](t):={\int f^+(t,m,c)-f^-(t,m,c)\ dm\ dc}.
\]
Here, the shape function $H(\cdot)$ has the following properties
\begin{itemize}
\item $H(x)=0,\ \forall x\leq 0,$
\item $H(x)>0,\ \forall x>0, $
\item $\dot{H}(x)\geq 0,\ \forall x\geq 0$.
\end{itemize}
In order to approximate the original Cross model best, we choose the shape function as follows:
\[
H_C(x):=\begin{cases}
0,&\  x\leq0,\\
x,&,\ x>0.
\end{cases}
\]
The second effect of our particle dynamics are the interactions through the switching rate.
\paragraph{Switching Mechanism} 
As derived previously, the switching of investors between a long ($\gamma$=1)
and short $(\gamma=-1)$ position is fully determined by the switching rate $\lambda$. Since we are faced with a rate, we need to scale the probability $\lambda_P$ by the characteristic time step of the Cross model $\Delta t_C>0$, we define: $\lambda(t,c,m,S):= \frac{\lambda_P(t,c,m,S)}{\Delta  t_C}$. \\
The loss of agents with a long, respectively short position, is simply described by multiplication of the rate $\lambda$ with the corresponding density function. 
\begin{align*}
 Q_{loss}[f^{(\cdot)}](t,m,c,S):=f^{(\cdot)}(t,m,c)\ \lambda(t,m,c,S).
\end{align*}
The gain term is more complex. As determined by the particle dynamic, all agents which have switched are re-emitted in the $(c,m)$ space at the point $(0,S)$. Hence, translated into our continuous dynamics we get:  
\begin{align*}
 Q_{gain}[f^{(\cdot)}](t,m,c,S):=\delta(m-S(t))\ \delta(c)\ \int Q_{loss}[f^{(\cdot)}](t,m,c,S)\ dmdc.
 \end{align*}
Thus, e.g. in the case of $f^+$ the switching dynamic is given by:
\[
\partial_t f^+(t,m,c)=Q_{gain}[f^-](t,m,c,S)-Q_{loss}[f^+](t,m,c,S).
\]
\paragraph{The Model} Finally, the complete evolutionary dynamics of the densities $f^+,\ f^-$ are described by the system:  
 \begin{footnotesize}
\begin{align} 
\begin{split} \label{modelHetero}
&\partial_t f^+(t,m,c)+\partial_c\left(  H(-ED[f^+,f^-](t))\ f^+(t,m,c) \right)= Q_{gain}[f^-](t,m,c,S)-Q_{loss}[f^+](t,m,c,S),\\
&\partial_t f^-(t,m,c)+\partial_c\left(H(ED[f^+,f^-](t))\  f^-(t,m,c) \right)= Q_{gain}[f^+](t,m,c,S)-Q_{loss}[f^-](t,m,c,S).
\end{split}
\end{align}
\end{footnotesize}

This PDE system is coupled with the SDE
\begin{align}
dS = \kappa  \dot{ED}\ S\ dt+ (1+\theta\ |ED|)\ S\ dW, \label{SDE}
\end{align}
where $W$ denotes the Wiener process and we interpret the stochastic integral in the It\^o sense. 
The SDE \eqref{SDE} is the time continuous version of the previously introduced stock price equation \eqref{eulermaju}. 
The PDE-SDE systems is coupled through the excess demand $ED$. Besides initial conditions, we pose Dirichlet boundary conditions
\begin{align*}
& \lim\limits_{c\to\pm \infty} f^+(t,m,c)= \lim\limits_{c\to\pm \infty} f^-(t,m,c)=0,\\
& \lim\limits_{m\to\pm \infty} f^+(t,m,c)= \lim\limits_{m\to\pm \infty} f^-(t,m,c)=0.
\end{align*}
Thus, we can simplify the time derivative of the excess demand as follows:
\begin{footnotesize}
\begin{align*}
\frac{d}{dt} ED[f^+,f^-](t) &= \int \p{}{t} f^+(t,m,c)-\p{}{t}f^-(t,m,c)\ dmdc\\
&= \int - \partial_c\left(  H(-ED[f^+,f^-](t))\ f^+(t,m,c) \right)+\partial_c\left(H(ED[f^+,f^-](t))\  f^-(t,m,c) \right)\\
&\quad\quad +Q^+_{gain}(t,m,c,S)-Q^+_{loss}(t,m,c,S)- Q^-_{gain}(t,m,c,S)+Q^-_{loss}(t,m,c,S)\ dm dc\\
&=2\ \int f^-(t,m,c) \lambda(t,m,c,S)-f^+(t,m,c) \lambda(t,m,c,S)\ dmdc.
\end{align*}
\end{footnotesize}

\paragraph{Space-homogeneous Model}
In this paragraph, we define the space-homogeneous model. Here, we mean by space variable the herding variable $c$, although this is no physical space.
The reason for that choice is the analogy to kinetic theory, since the mean field Cross model has an advection in the herding variable $c$.
The investment decision does no longer depend on the two dimensional $(m,c)$ space but only on the memory variable $m$. Therefore, the space-homogeneous model does not include the herding effect and corresponds to the only inaction dynamics of the original Cross model. 
Thus, we define the space-homogeneous model by:    
 \begin{align}
 \begin{split}\label{homoModel}
&\partial_t g^+(t,m)=Q_{gain}^h[g^{-}] (t,m,S)-Q_{loss}^{h}[g^{+}](t,m,S),\\
&\partial_t g^-(t,m)=Q_{gain}^h[g^{+}] (t,m,S)-Q_{loss}^{h}[g^{-}](t,m,S), \\
& Q_{gain}^h[g^{(\cdot)}] (t,m,S):= \delta(m-S(t))\  \int g^{(\cdot)}(t,m)\ \lambda_h(t,m,S)\ dm,\\
&Q_{loss}^{h}[g^{(\cdot)}](t,m,S):= g^{(\cdot)}(t,m)\ \lambda_h(t,m,S).
\end{split}
\end{align}
The homogeneous model can be directly derived from the full model by integrating out the herding variable and setting the switching rate to $\lambda_h:=\frac{q}{\Delta t_C}$. Here, one uses the linearity of the collision integral. 
In fact, $g^{(\cdot)} (t,m):= \int\limits_{\R} f^{(\cdot)} (t,m,c)\ dc$ holds if $\lambda=\lambda_h$. 

\section{Numerics} \label{numerics}
In this section, we give numerical examples of our proposed mean field Cross model. We show that the kinetic model exhibits the same characteristic behavior as the original Cross model. We solve the PDE system with a standard finite volume discretization. We use a first order upwind scheme and apply the trapezoidal quadrature formula to evaluate the integrals of our model. The resulting ODEs are solved by an explicit Euler method. Notice that due to the stiff source term caused by the dirac deltas we get an additional stability condition to the classical Courant-Friedrich-Lewy condition.  We approximate the dirac deltas by a uniform distribution with support on one grid cell. The SDE is approximated by a simple Euler-Maruyama discretization. \\
First we present test cases of the space-homogeneous and secondly of the space-heterogeneous mean field Cross model.
Finally, we present the corresponding Monte Carlo solver of our mean field Cross model and give further examples. 
\paragraph{Space-homogeneous Model} 
Figure \ref{homo1} shows that in the space-homogeneous setting we obtain Gaussian distributed stock returns. 
Furthermore, there is no auto-correlation present (see figure \ref{homo1}).
\begin{figure}[h!]
\begin{center}
\includegraphics[width=1\textwidth]{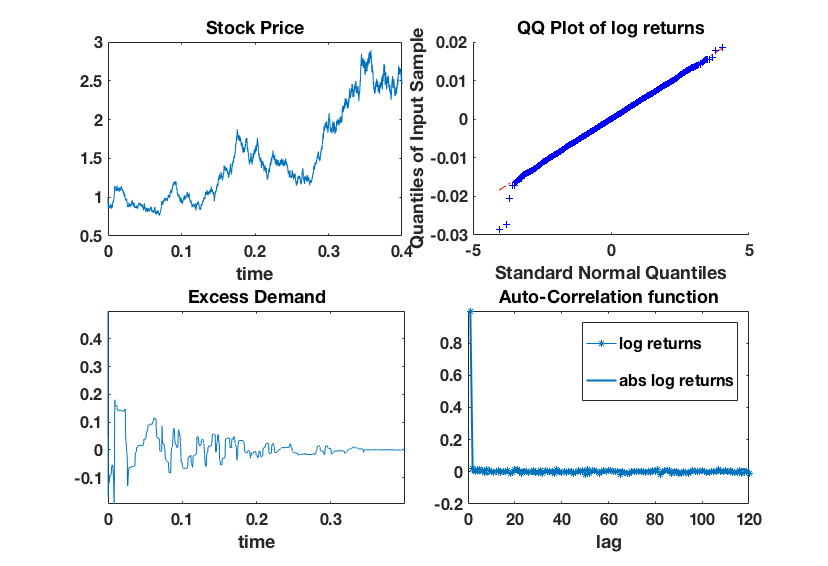}
\caption{Space-homogeneous model with $\theta=0$. For further parameters we refer to table \ref{ParMFCross}.}\label{homo1}
\end{center}
\end{figure}
The simulation in figure \ref{homo1} have been conducted with $\theta=0$ but we want to point out that for $\theta=2$ we obtain qualitatively the same result. 
We want to emphasize that the simulation results are qualitatively identical to the simulations of the original Cross model. 
\clearpage

\paragraph{Space-heterogeneous Model}
Figure \ref{hetero1} shows that the space-heterogeneous model output is characterized by a non-Gaussian return distribution. 
The heteroskedasticity parameter is set to zero and thus we see in figure \ref{hetero1} an auto-correlation of approximately zero.

\begin{figure}[h!]
\begin{center}
\includegraphics[width=1\textwidth]{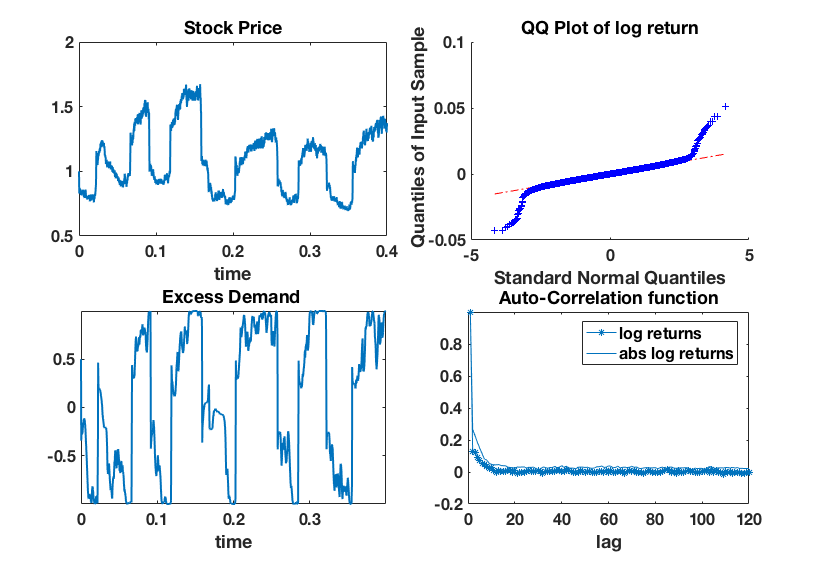}
\caption{Space-heterogeneous model with $\theta=0$. Further parameters are given in table \ref{ParMFCross}.}\label{hetero1}
\end{center}
\end{figure}
By setting the heteroskedasticity parameter to $\theta=2$ we then obtain a positive auto-correlation of absolute returns (see figure \ref{hetero2}).
The other characteristics of the model remain unchanged, thus figure \ref{hetero2} shows a non-Gaussian return distributions and oscillating excess demand as well. 
\begin{figure}[h!]
\begin{center}
\includegraphics[width=1\textwidth]{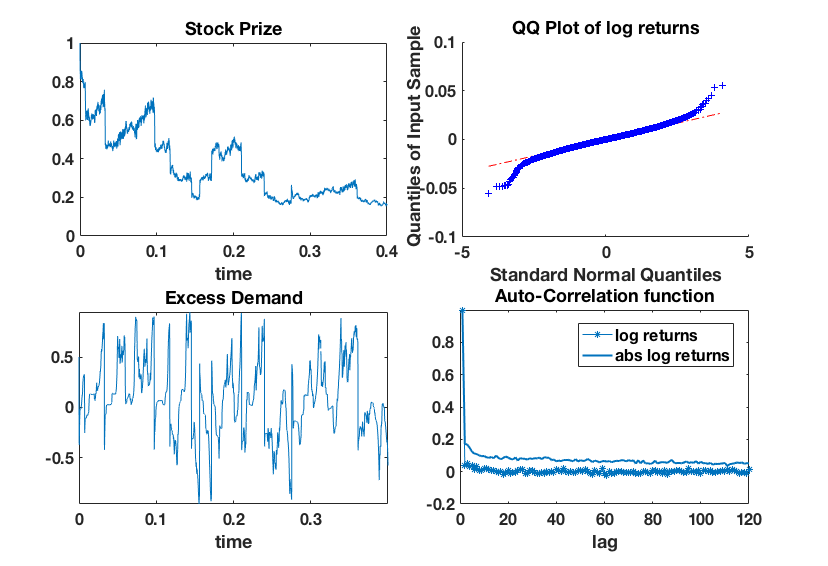}
\caption{Space-heterogeneous model with $\theta=2$. Further parameters are given in table \ref{ParMFCross}.}\label{hetero2}
\end{center}
\end{figure}
In summary, we can state that the mean field Cross model exhibits the same qualitative behavior as the original Cross model. 

\subsection{Deterministic Stock Price Equation}
The following test cases are conducted with a deterministic stock price equation. 
We do this in order to investigate if the homogeneous and heterogeneous models already behave differently 
in a fully deterministic setting. Thus, the mean field Cross model becomes a PDE-ODE system. 

\paragraph{Space-homogeneous Model}
From figure \ref{ODE1} we deduce that the dynamics converge to a steady profile. In addition, figure \ref{ODE1} reveals 
that the masses of both populations average before they reach a steady state. 
\begin{figure}[h!]
\begin{center}
\includegraphics[width=1\textwidth]{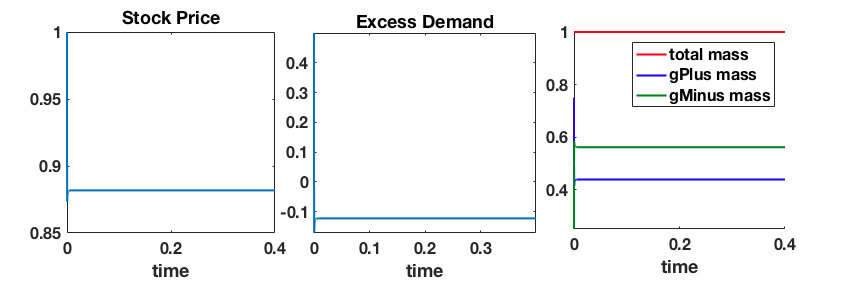}
\caption{Space-homogeneous model with deterministic stock price equation and $\theta=0$. For further parameters we refer to table \ref{ParMFCross}.}\label{ODE1}
\end{center}
\end{figure}

\paragraph{Space-heterogeneous Model}
As in the space-homogeneous model, the dynamics of the space-heterogeneous model reaches a steady state as well (see figure \ref{ODE2}).
In figure \ref{ODE2}, we obtain that the excess demand becomes $-1$. Thus, all agents have the same position. 
\begin{figure}
\begin{center}
\includegraphics[width=1\textwidth]{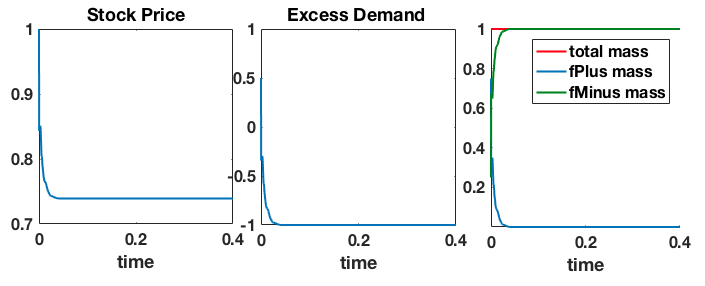}
\caption{Space-heterogeneous model with deterministic stock price equation and $\theta=2$. For further parameters we refer to table \ref{ParMFCross}.}\label{ODE2}
\end{center}
\end{figure}
Hence, we can conclude that already the deterministic skeleton of the space-homogeneous and space-heterogeneous model behave differently. In the next section, we will analyze the steady states of both models in detail.

\subsection{Monte Carlo Solver}
In this section we present a Monte Carlo solver of our space-heterogeneous mean field Cross model.  
Although the Monte Carlo solvers have a poor convergence rate, there is at least one advantage. In comparison to most
deterministic schemes, Monte Carlo solvers do not add any dissipation to the numerical solution \cite{pareschi2013interacting}.
This is an important feature, e.g. when analyzing the tail behavior of the density function. \\
In order to derive the Monte Carlo solver, we need to interpret the mean field Cross model as the master equation of a stochastic process. 
The main feature of the stochastic process is the switching mechanism. We summarize the Monte Carlo algorithm as follows. 
\begin{center}
\begin{tabular}{|l|}
\hline
\textbf{Monte Carlo Algorithm}\\
\hline
1. Generate sample $X_i^0=(\gamma_i^0,m_i^0,c_i^0)\in \{-1,1\} \times \R_{\geq 0}\times \R_{\geq 0},$\\
\quad $ \ i=1,...,N $ from initial distribution.\\
2. For each time step $k\in \{  1,...,\frac{T}{\Delta t}  \}$\\
\quad i) calculate $ED_N^k$ see \eqref{ED}, $\ S^{k+1}$ see equation \ref{eulermaju}, $\hat{\lambda}_i^k:=1-\exp(-\Delta t\ \lambda_i^k)$\\
\quad ii) update sample $X^k_i$ to $X_i^{k+1}$ and set\\
\quad\quad a) with probability $\hat{\lambda}_i^k$\\
\quad\quad\quad $\gamma_i^{k+1}=-\gamma_i^k,\ m_i^{k+1}=S^{k+1},\ c_i^{k+1}=0.$ \\
\quad\quad b) otherwise \\
\quad\quad\quad $\gamma_i^{k+1}=\gamma_i^k,\ m_i^{k+1}=m^{k}_i$\\
\quad\quad\quad and calculate $c_i^{k+1}$ by \eqref{uh}. \\
3. Reconstruct densities $f_{k+1}^+, f_{k+1}^-$. \\
\hline
\end{tabular} 
\end{center}
Notice that the switching rate of our particle model $\lambda_P=\Delta t_C\ \lambda$ is a first order Taylor approximation of $\hat{\lambda}$. The advantage of $\hat{\lambda}$ compared to 
$\lambda_P$ is that $\hat{\lambda}\in (0,1)$ holds for arbitrary time steps $\Delta t$. Thus, there are no time step restrictions which is an advantage compared to the finite volume method. 
\begin{figure}[h!]
\begin{center}
\includegraphics[width=1\textwidth]{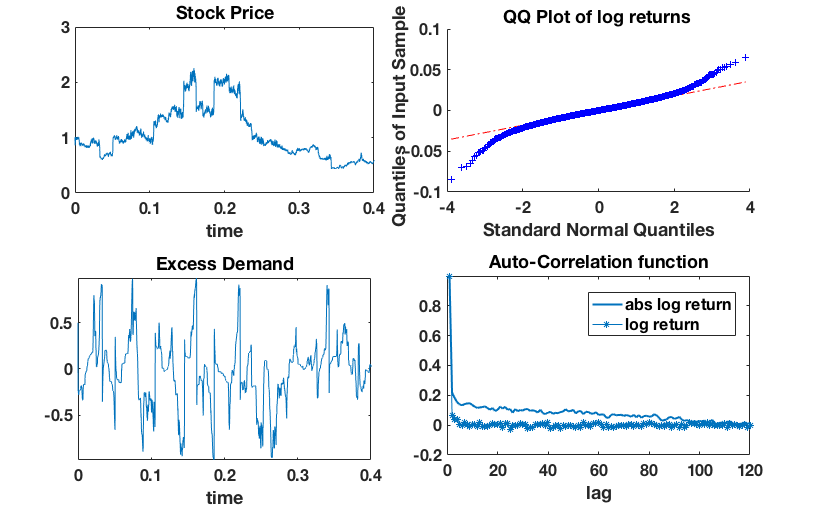}
\caption{Space-heterogeneous mean field Cross model with $\theta=2$. The simulation has been performed with a Monte Carlo solver with $10^5$ samples.
For further parameter settings we refer to table \ref{ParMFCross}.}\label{MC}
\end{center}
\end{figure}
Qualitatively, the model output in figure \ref{MC} coincide with the previous simulations conduced with a finite volume scheme.
Consequently, the results in figure \ref{MC} coincide qualitatively to the results of the original Cross model.

%\clearpage

\section{Qualitative Behavior of the Model}
In this section, we want to study the analytical behavior of the space-homogeneous and space-heterogeneous mean field Cross model. The goal is to confirm our previous findings and to understand the complex model behavior in more detail.\\ \\
Both models are integro-differential equations equipped with a linear interaction integral. The number of agents is conserved for both models, which corresponds to the mass of the system. Furthermore, we prove that the only collision invariants of both models are given by constant functions. We refer to the appendix for details. \\ \\
 We divide our analysis in two parts, first we study the space-homogeneous model and secondly the space-heterogeneous model. 
\paragraph{Space-homogeneous Model}
The null space at time $t$ and stock price $S$ of the collision operator $Q^h[g](t,m,S):= Q_{gain}^h[g](t,m,S)-Q_{loss}^h[g](t,m,S)$ is given by
\[
\mathit{N}(Q^h)(t,S)=\{ g \in Y:  supp(g) \subseteq \{ m\in\R : \lambda_h(t,m,S) \equiv 0 \}     \},
\]
where $Y(\R,\R)$ denotes the set of young measures. We have chosen this function space since dirac delta functions are a subset of young measures. Notice that especially $g= \delta(m-S)$ is in the null space. \\
The previous simulations presented in figure \ref{ODE1} indicate that in the case of a deterministic stock price evolution our system \eqref{homoModel} reaches a steady profile. 
\paragraph{Steady States}
We assume that the stock price $S\equiv s_0>0$ is constant. This is reasonable because in equilibrium the excess demand is constant and thus the time derivative is zero. Hence, the right hand side of the deterministic stock price equation is zero.
Then all equilibrium solutions $g_{\infty}^+, g_{\infty}^-$ of the model are described by 

\begin{itemize}
\item[$a)$] $ED[g^+_{\infty}, g^-_{\infty}]=0$
\begin{itemize}
\item[$i)$] $g^+_{\infty}= g^-_{\infty}=0$.
\item[$ii)$] $g_{\infty}^+ , g_{\infty}^- >0$ and $g_{\infty}^+ , g_{\infty}^- \in \mathit{N}(Q^h)(s_0)$ with $\int g^+_{\infty}\ dm= \int g^-_{\infty}\ dm$.
\end{itemize}
\item[$b)$] $ED[g^+_{\infty}, g^-_{\infty}]<0$
\begin{itemize}
\item[$i)$] $g_{\infty}^+=0$ and $g_{\infty}^->0,\ g_{\infty}^- \in \mathit{N}(Q^h)(s_0)$.
\item[$ii)$] $g_{\infty}^+ , g_{\infty}^->0$ and $g_{\infty}^+ , g_{\infty}^- \in \mathit{N}(Q^h)(s_0)$ with $\int g^+_{\infty}\ dm \neq  \int g^-_{\infty}\ dm$.
\end{itemize}
\item[$c)$] $ED[g^+_{\infty}, g^-_{\infty}]>0$
\begin{itemize}
\item[$i)$] $g_{\infty}^-=0$ and $g_{\infty}^+>0,\ g_{\infty}^+ \in \mathit{N}(Q^h)(s_0)$.
\item[$ii)$] $g_{\infty}^+ , g_{\infty}>0$ and $g_{\infty}^+ , g_{\infty}^- \in \mathit{N}(Q^h)(s_0)$ with $\int g^+_{\infty}\ dm \neq  \int g^-_{\infty}\ dm$.
\end{itemize}
\end{itemize}
If the steady state solutions $g_{\infty}^+, g_{\infty}^-$ are elements of the null space $\mathit{N}(Q^h)(s_0)$, this means that they do not switch their market position any longer. The reason is that the memory variable or more precisely the stock price of the last switch is sufficiently close to the equilibrium price such that the agent does not feel the tension to change position.

\paragraph{Entropy Bound} As frequently done in kinetic theory, we want to show the entropy dissipation of our system \eqref{homoModel}. 
Such an entropy inequality is the key ingredient in order to prove uniqueness and asymptotic behavior in kinetic models. 
Mathematically, an entropy of a kinetic equation is a special kind of Lyapunov functional. 
We use the notion of general relative entropy \cite{michel2005general, perthame2005general}.
The dual equation of our system for a constant stock price $S\equiv s_0$ is given by
\begin{align}
\begin{split}\label{dualHomo}
&-\partial_t \psi^+(t,m) = \psi^-(t,S)\ \lambda^h(m,S) -\psi^+(t,m)\ \lambda^h(m,S) \\
&-\partial_t \psi^-(t,m) = \psi^+(t,S)\ \lambda^h(m,S) -\psi^-(t,m)\ \lambda^h(m,S) .
\end{split}
\end{align}
We define the general relative entropy for positive functions $\psi$, $p$ and a convex function $K$ to be
$$
t\mapsto \mathcal{K}_{\psi}(g,p):= \int \limits_{\R} \psi\ p\ K\left(\frac{g}{p}\right)\ dm.
$$
We can then formulate the following theorem. 
\begin{theorem}\label{entropy}
For all convex functions $K: \R\to \R$ and any solutions $p^+,p^->0,\ g^+,g^- $ of equation \eqref{homoModel} and solutions $\psi^+,\psi^->0$ of the dual equation \eqref{dualHomo}, the general relative entropy inequality
\begin{align}
\frac{d}{dt} \left(  \mathcal{K}_{\psi^+} (g^+,p^+)+ \mathcal{K}_{\psi^-} (g^-,p^-) \right)\leq 0,\label{entropyIn}
\end{align}
holds.
\end{theorem}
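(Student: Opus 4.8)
The plan is to follow the general relative entropy (GRE) method \cite{michel2005general, perthame2005general}, adapted to the two-species switching system \eqref{homoModel}. The starting point is that three objects evolve along the same dynamics: the solution $g^\pm$, a second strictly positive solution $p^\pm$ of \eqref{homoModel}, and a positive solution $\psi^\pm$ of the dual system \eqref{dualHomo}. Introducing the ratios $u^\pm := g^\pm/p^\pm$, I would first differentiate $\mathcal{K}_{\psi^\pm}(g^\pm,p^\pm)=\int_{\R}\psi^\pm\,p^\pm\,K(u^\pm)\,dm$ in time by the product and chain rules. This produces three groups of terms, weighted respectively by $\partial_t\psi^\pm$, $\partial_t p^\pm$ and $\partial_t g^\pm$, the last entering through the identity $\partial_t u^\pm = (\partial_t g^\pm - u^\pm\,\partial_t p^\pm)/p^\pm$.

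Next I would substitute the evolution laws. For $g^\pm$ and $p^\pm$ these consist of the local loss $-\lambda_h\,g^\pm$ (resp. $-\lambda_h\,p^\pm$) plus the Dirac gain $\delta(m-S)\int\lambda_h\,g^\mp\,dm$ (resp. with $p^\mp$), while the dual equation \eqref{dualHomo} gives $\partial_t\psi^\pm = \lambda_h(m,S)\,[\psi^\pm(m)-\psi^\mp(S)]$. Two cancellations are the heart of the computation: the local loss contributions weighted by $\psi^\pm\lambda_h\,p^\pm\,K(u^\pm)$ cancel between the $\partial_t\psi^\pm$ and $\partial_t p^\pm$ groups, and the terms weighted by $\psi^\pm\lambda_h\,g^\pm\,K'(u^\pm)$ cancel between the $\partial_t g^\pm$ group and the $u^\pm\partial_t p^\pm$ piece (using $u^\pm p^\pm = g^\pm$). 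What survives is a bulk term $-\psi^\mp(S)\int\lambda_h\,p^\pm\,K(u^\pm)\,dm$ together with several contributions localized at $m=S$ through the Dirac masses.

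The final step is to add the $+$ and $-$ equations and regroup all surviving terms by the common factors $\psi^+(S)$ and $\psi^-(S)$. After collecting the $\lambda_h\,p^\mp$-weighted integrals, the coefficient of $\psi^+(S)$ reduces to $\int\lambda_h\,p^-\,\big[K(u^+(S)) + K'(u^+(S))\,(u^-(m)-u^+(S)) - K(u^-(m))\big]\,dm$, and symmetrically for $\psi^-(S)$ with the species exchanged. Here I would invoke convexity of $K$ in its tangent-line form, $K(x)+K'(x)(y-x)\le K(y)$, applied with $x=u^+(S)$ and $y=u^-(m)$, so that the bracket is pointwise $\le 0$. Since $\psi^\pm(S)>0$, $\lambda_h\ge 0$ and $p^\mp>0$, each regrouped term is nonpositive, which yields \eqref{entropyIn}.

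The main obstacle is bookkeeping rather than analysis: one must track the nonlocal Dirac gain correctly, recalling that every switcher is re-emitted at $m=S$, so that after summation the value $u^\pm(S)$ plays the role of the tangent point $x$ and $u^\mp(m)$ the role of $y$ in the convexity inequality. It is precisely this cross-species pairing that renders the dissipation sign-definite. The argument is formal in that it presupposes enough regularity and positivity, namely $p^\pm>0$ so that $u^\pm$ and its value at $S$ are well defined and $\psi^\pm>0$, to evaluate the densities against $\delta(m-S)$ and to differentiate under the integral; these are exactly the standing hypotheses of the statement.
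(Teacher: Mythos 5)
Your proposal is correct and follows essentially the same route as the paper's proof: the standard general relative entropy computation, in which the Dirac gain terms localize the ratio $u^\pm=g^\pm/p^\pm$ at $m=S$ and the surviving brackets $K(u^+(S))+K'(u^+(S))\,(u^-(m)-u^+(S))-K(u^-(m))$ (and its species-exchanged counterpart) are nonpositive by the tangent-line form of convexity, with positivity of $\psi^\pm$, $p^\pm$ and $\lambda_h\ge 0$ giving the sign. The cross-species pairing you identify as the key bookkeeping point is exactly what appears in the paper's displayed computation.
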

For the detailed proof we refer to the appendix \ref{AppendixAnal}. As a direct consequence, we can state the following a-priori bound:
\begin{lemma}
For any functions satisfying theorem \ref{entropy} and any convex function $K$ we can state the following inequality.
\begin{align*}
&\int\limits_{\R}  \psi^+(t,m)\ p^+(t,m)\ K\left( \frac{g^+(t,m)}{p^+(t,m)}\right)+   \psi^-(t,m)\ p^-(t,m)\ K\left( \frac{g^-(t,m)}{p^-(t,m)}\right)\ dm \\
\quad\quad &\leq \int\limits_{\R}  \psi^+(0,m)\ p^+(0,m)\ K\left( \frac{g^+(0,m)}{p^+(0,m)}\right)+   \psi^-(0,m)\ p^-(0,m)\ K\left( \frac{g^-(0,m)}{p^-(0,m)}\right)\ dm
\end{align*}
\end{lemma}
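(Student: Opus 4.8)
The plan is to obtain the Lemma as an immediate corollary of Theorem \ref{entropy} by integrating the differential entropy inequality in time. First I would observe that, by the very definition of the general relative entropy $\mathcal{K}_\psi(g,p)$, the integral on the left-hand side of the asserted inequality is exactly the functional
\[
\Phi(t) := \mathcal{K}_{\psi^+}(g^+,p^+) + \mathcal{K}_{\psi^-}(g^-,p^-),
\]
regarded as a function of $t$ through the time dependence of $g^\pm$, $p^\pm$ and $\psi^\pm$, while the right-hand side is precisely $\Phi(0)$. Hence the claim is equivalent to the monotonicity statement $\Phi(t) \leq \Phi(0)$ for all $t \geq 0$.

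Next I would invoke Theorem \ref{entropy}, which asserts that $\frac{d}{dt}\Phi(t) \leq 0$ under exactly the hypotheses imposed in the Lemma (convexity of $K$, positivity of $p^\pm$ and $\psi^\pm$, and that $g^\pm$, $\psi^\pm$ solve \eqref{homoModel} and \eqref{dualHomo} respectively). Since $\Phi$ is therefore non-increasing, the Fundamental Theorem of Calculus gives
\[
\Phi(t) - \Phi(0) = \int_0^t \frac{d}{ds}\Phi(s)\, ds \leq 0,
\]
because the integrand is non-positive for every $s \in [0,t]$. Rearranging this yields $\Phi(t) \leq \Phi(0)$, which is the assertion of the Lemma once $\Phi$ is written out via the definition of $\mathcal{K}_\psi$.

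The only point requiring care --- and hence the main (if modest) obstacle --- is the justification that $\Phi$ is well-defined and differentiable, so that the differentiation under the integral sign performed implicitly in Theorem \ref{entropy} and the application of the Fundamental Theorem of Calculus are both legitimate. This amounts to finiteness of the entropy integrals together with enough regularity of the solutions to interchange $\frac{d}{dt}$ with the spatial integration. These are, however, precisely the standing assumptions under which Theorem \ref{entropy} is established, so no additional estimates are needed; the Lemma follows directly by time integration of the already-proven inequality.
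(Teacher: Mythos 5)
Your argument is correct and is exactly how the paper obtains this lemma: the paper states it as a direct consequence of Theorem \ref{entropy}, i.e.\ the time-integration of the differential inequality $\frac{d}{dt}\left(\mathcal{K}_{\psi^+}(g^+,p^+)+\mathcal{K}_{\psi^-}(g^-,p^-)\right)\leq 0$ that you carry out. Your added remark about differentiability and finiteness of the entropy functional is a reasonable caveat but introduces nothing beyond the standing assumptions already used in the theorem.
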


\begin{remark}
The application of the entropy inequality \eqref{entropyIn} in order to prove convergence to the equilibrium distributions is not straightforward. The difficulty is that all possible steady state solutions are not strictly positive. We expect, the long time asymptotics of system \eqref{homoModel} to be determined by the positive eigenvector of the largest non-negative eigenvalue \cite{perthame2005general}. The entropy inequality in theorem \ref{entropy} is the appropriate tool to show the long time convergence or even the rate of convergence to a steady state. 
\end{remark}

\paragraph{Space-heterogeneous Model}
We can analyze the heterogeneous model in the same manner as before. The null space at time $t$ and stock Price $S$ of the collision operator
$$
Q[f] (t,m,c,S):= Q_{gain}[f](t,m,c,S) - Q_{loss}[f](t,m,c,S),
$$ 
is given by 
\[
\mathit{N}(Q)(t,S)=\{ f \in \bar{Y}:  supp(f) \subseteq \{ (m,c)\in\R^2 : \lambda(t,m,c,S) \equiv 0 \}     \},
\]
where $\bar{Y}(\R^2,\R)$ is again the set of young measures. \\

\paragraph{Steady States} As before, we assume $S\equiv s_0>0$ and all equilibrium solutions $f^+_{\infty}, f^-_{\infty}$ are characterized by:

\begin{itemize}
\item[$A)$] $ED[f^+_{\infty}, f^-_{\infty}]=0$
\begin{itemize}
\item $f^+_{\infty}= f^-_{\infty}=0$.
\item $f_{\infty}^+ , f_{\infty}^- >0$ and $f_{\infty}^+ , f_{\infty}^- \in \mathit{N}(Q)(s_0)$ with $\int f^+_{\infty}\ dmdc= \int f^-_{\infty}\ dmdc$.
\end{itemize}
\item[$B)$] $ED[f^+_{\infty}, f^-_{\infty}]<0$
\begin{itemize}
\item $f_{\infty}^+=0$ and $f_{\infty}^->0,\ f_{\infty}^- \in \mathit{N}(Q)(s_0)$.
\end{itemize}
\item[$C)$] $ED[f^+_{\infty}, f^-_{\infty}]>0$
\begin{itemize}
\item $f_{\infty}^-=0$ and $f_{\infty}^+>0,\ f_{\infty}^+ \in \mathit{N}(Q)(s_0)$.
\end{itemize}
\end{itemize}
In comparison to the homogeneous setting, the case $ED[f_{\infty}^+, f_{\infty}^+]\neq 0$ with $f_{\infty}^+, f_{\infty}^->0$ cannot be a steady state. The reason is that due to the advection (increase of herding pressure), the partial derivative with respect to the herding pressure $c$ must be constant. Thus, for any test function $\phi(m,c)$
and $ED[f_{\infty}^+, f_{\infty}^+]<0$ 
$$
\int \phi(m,c)\ \partial_c\Big(H\big(-ED[f_{\infty}^+, f_{\infty}^+] \big)\ f_{\infty}^+\Big) \ dmdc=0,
$$
has to hold. This constant has to be zero, otherwise this would be a contradiction to our boundary condition. 
$$
\lim\limits_{m,c\to \infty} f_{\infty}^{(\cdot)} (m,c) =0.
$$
Hence, the excess demand can only take the values $\{-1,0,1\}$ in the equilibrium. This guides us to the explanation that the interplay of these steady states creates the characteristic oscillatory behavior in the stochastic simulations.

\paragraph{Entropy Bound} In the two dimensional case the definition of the generalized entropy can be translated one to one. Thus, for positive functions $\Phi$, $n$ and any convex function $K$ we have
$$
t\mapsto \mathcal{K}_{\Phi}(f,n):=\int \limits_{\R}  \int \limits_{\R} \Phi\ n\ K\left(\frac{f}{n}\right)\ dm dc.
$$
The dual equation of  the heterogeneous model is given by
\begin{align}
\begin{split}\label{dualHetero}
&-\partial_t \Phi^+(t,m,c)-H(-ED)\ \partial_c \Phi^+(t,m,c) = \Phi^-(t,S,c) \ \lambda(m, c,S) - \Phi^+(t,m, c)\ \lambda(m, c, S)\\
&-\partial_t \Phi^-(t,m,c) -H(ED)\ \partial_c \Phi^-(t,m,c)= \Phi^+(t,S,c)\ \lambda(m, c, S) - \Phi^-(t,m, c)\ \lambda(m,c ,S).
\end{split}
\end{align}

\begin{theorem}\label{entropyHetero}
For all convex functions $K: \R\to \R$ and any solutions $n^+,n^->0,\ f^+,f^- $ of equation \eqref{modelHetero} and solutions $\Phi^+,\Phi^->0$ of the dual equation \eqref{dualHetero} the general relative entropy inequality
\begin{align*}
\frac{d}{dt} \left(  \mathcal{K}_{\Phi^+} (f^+,n^+)+ \mathcal{K}_{\Phi^-} (f^-,n^-) \right)\leq 0,
\end{align*}
holds.
\end{theorem}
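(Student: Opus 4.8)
The plan is to reproduce the general relative entropy computation used for the homogeneous Theorem~\ref{entropy} and to absorb the new transport term as a harmless total derivative. The first step is to remove the nonlinearity. Along a fixed reference solution $(f^+,f^-)$ the excess demand $e(t):=ED[f^+,f^-](t)$ is a scalar function of time, so the advection speeds $H(-e(t))$ and $H(e(t))$ become prescribed, $c$-independent coefficients. With these frozen, \eqref{modelHetero} is a linear transport--reaction system, and I take $n^+,n^-$ and the dual profiles $\Phi^+,\Phi^-$ to solve this linear system and its adjoint \eqref{dualHetero} with the \emph{same} frozen coefficients. Introducing the characteristic derivatives $\mathcal{D}^\pm:=\partial_t+H(\mp e)\,\partial_c$ and the ratios $h^\pm:=f^\pm/n^\pm$, I may write $\frac{d}{dt}\mathcal{K}_{\Phi^\pm}(f^\pm,n^\pm)=\int\int \mathcal{D}^\pm\big(\Phi^\pm n^\pm K(h^\pm)\big)\,dm\,dc$, since the extra $c$-flux $\int\int H(\mp e)\,\partial_c(\cdots)\,dm\,dc$ vanishes by the Dirichlet boundary conditions in $c$.

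The second step is the chain-rule bookkeeping. Applying the quotient rule and inserting the primal equations for $f^\pm$ and $n^\pm$, the loss contributions proportional to $\lambda$ cancel in $\mathcal{D}^\pm h^\pm$, leaving $\mathcal{D}^\pm h^\pm=(G^\pm-h^\pm\tilde G^\pm)/n^\pm$, where $G^\pm,\tilde G^\pm$ denote the gain terms built from $f^\mp,n^\mp$. Feeding in the dual equation \eqref{dualHetero} for $\mathcal{D}^\pm\Phi^\pm$, the diagonal $\lambda$-terms cancel as well, and collecting everything I expect
\begin{align*}
\frac{d}{dt}\mathcal{K}_{\Phi^\pm}(f^\pm,n^\pm)=\int\int \Big[\Phi^\pm K'(h^\pm)\,G^\pm+\Phi^\pm\big(K(h^\pm)-h^\pm K'(h^\pm)\big)\tilde G^\pm-n^\pm K(h^\pm)\,D^\pm\Big]\,dm\,dc,
\end{align*}
with $D^\pm$ the dual gain term, the advection having dropped out under the spatial integral.

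The third step is where the structure closes. The gain terms carry the factor $\delta(m-S)\,\delta(c)$ coming from re-emission of every switched agent at the point $(m,c)=(S,0)$, so the corresponding integrals collapse to the single value $h^\mp_0:=h^\mp(S,0)$ weighted by $\Phi^\mp(S,0)$; this is the heterogeneous analogue of the $\psi^\mp(t,S)$ appearing in the homogeneous dual \eqref{dualHomo} (the factor in \eqref{dualHetero} is to be read at $c=0$). Summing the $+$ and $-$ contributions and regrouping them by the positive weights $\Phi^\mp(S,0)$ against the positive measures $d\mu^\pm:=\lambda\,n^\pm\,dm\,dc$, each group reduces to
\begin{align*}
\Phi^{\mp}(S,0)\int\int\Big[K(h^{\mp}_0)+K'(h^{\mp}_0)\big(h^{\pm}-h^{\mp}_0\big)-K(h^{\pm})\Big]\,d\mu^{\pm}.
\end{align*}
By convexity of $K$ the bracket is the signed gap between the tangent of $K$ at $h^{\mp}_0$ and $K$ itself, hence nonpositive; since $\Phi^\mp(S,0)>0$ and $d\mu^\pm\ge0$, the whole derivative is $\le0$, which is the claim.

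I expect the main obstacle to be the rigorous handling of the singular gain term: the product $\delta(m-S)\,\delta(c)$ multiplied by a nonlocal integral must be paired correctly with the dual profile evaluated at the re-emission point, and one must check that $n^\pm(S,0)>0$ so that the ratios $h^\pm_0$ are well defined. A second, essential point is the freezing of the excess demand, which is exactly what lets $f^\pm$, $n^\pm$ and $\Phi^\pm$ share one transport velocity so that the $c$-advection cancels in $\mathcal{D}^\pm h^\pm$ and integrates away under the boundary conditions; without this common velocity the quotient-rule cancellation fails.
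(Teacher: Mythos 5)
Your proof follows essentially the same route as the paper: the paper's own argument is simply to repeat the homogeneous general-relative-entropy computation and observe that the additional $c$-advection term integrates away under the decay/boundary conditions, which is exactly your strategy. You are in fact more careful than the paper on two points it leaves implicit --- freezing the excess demand so that $f^\pm$, $n^\pm$ and $\Phi^\pm$ share one transport velocity (without which the quotient-rule cancellation fails), and reading the dual gain coefficient $\Phi^\mp(t,S,\cdot)$ at the re-emission point $c=0$ --- both of which are needed for the cancellation and the final convexity estimate to go through.
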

\begin{proof}
The proof is similar to the homogeneous case. The only difference is an additional advection term.
Due to the growth assumption on our densities the advection terms vanish after integration over $c$-space. 
\end{proof}
As direct consequence we get an a-priori bound. 
\begin{lemma}
For any functions satisfying theorem \ref{entropyHetero} and any convex function $K$ we can state the following inequality.
\begin{footnotesize}
\begin{align*}
&\int\limits_{\R}  \Phi^+(t,m,c)\ n^+(t,m,c)\ K\left( \frac{f^+(t,m,c)}{n^+(t,m,c)}\right)+   \Phi^-(t,m,c)\ n^-(t,m,c)\ K\left( \frac{f^-(t,m,c)}{n^-(t,m,c)}\right)\ dmdc \\
\quad\quad &\leq \int\limits_{\R}  \Phi^+(0,m,c)\ n^+(0,m,c)\ K\left( \frac{f^+(0,m,c)}{n^+(0,m,c)}\right)+   \Phi^-(0,m,c)\ n^-(0,m,c)\ K\left( \frac{f^-(0,m,c)}{n^-(0,m,c)}\right)\ dmdc.
\end{align*}
\end{footnotesize}
\end{lemma}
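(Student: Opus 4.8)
The plan is to recognize that this a-priori bound is nothing more than the integrated form of the differential inequality already established in Theorem \ref{entropyHetero}, so essentially no new work is required beyond invoking that theorem and the fundamental theorem of calculus. First I would introduce the shorthand
$$
E(t):= \mathcal{K}_{\Phi^+}(f^+,n^+)+ \mathcal{K}_{\Phi^-}(f^-,n^-),
$$
and observe that, by the very definition of $\mathcal{K}_{\Phi}(f,n)=\int_{\R}\int_{\R}\Phi\,n\,K(f/n)\ dm\,dc$, the left-hand side of the claimed inequality is precisely $E(t)$ and the right-hand side is precisely $E(0)$. Thus the lemma is equivalent to the monotonicity statement $E(t)\leq E(0)$.

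Next I would apply Theorem \ref{entropyHetero}, which for every convex $K$ and every admissible tuple of solutions $f^\pm,n^\pm>0,\Phi^\pm>0$ of \eqref{modelHetero} and \eqref{dualHetero} asserts $\frac{d}{dt}E(t)\leq 0$. Integrating this pointwise-in-time inequality over $[0,t]$ gives
$$
E(t)-E(0)=\int\limits_0^t \frac{d}{ds}E(s)\ ds\leq 0,
$$
and unfolding the definition of $E$ recovers exactly the stated inequality. This reduces the entire proof to the chain: definition of $E$, apply the theorem, integrate.

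The only point deserving any care is the regularity needed to justify the fundamental theorem of calculus, namely that $t\mapsto E(t)$ is absolutely continuous (or at least $C^1$) so that integrating its derivative legitimately reproduces the increment $E(t)-E(0)$, together with the finiteness $E(0)<\infty$ guaranteeing the right-hand side is meaningful. Under the standing smoothness and decay hypotheses on $f^\pm,n^\pm,\Phi^\pm$ already used to derive the entropy dissipation in Theorem \ref{entropyHetero}, these conditions hold, so I do not expect any genuine obstacle here; the lemma is a direct corollary, and the proof is essentially complete once the differential inequality is in hand.
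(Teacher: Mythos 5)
Your proof is correct and matches the paper's (implicit) reasoning: the paper states this lemma as a direct consequence of Theorem \ref{entropyHetero}, i.e.\ precisely the integration in time of the dissipation inequality $\frac{d}{dt}\left(\mathcal{K}_{\Phi^+}(f^+,n^+)+\mathcal{K}_{\Phi^-}(f^-,n^-)\right)\leq 0$ that you carry out. Your added remark on the regularity needed to invoke the fundamental theorem of calculus is a reasonable point of care but introduces nothing beyond the paper's standing assumptions.
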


In order to study the stability properties of the space-heterogeneous system \eqref{modelHetero} and space-homogeneous system \eqref{homoModel}, we would need to analyze the eigenvalue problem. This study is left open for further research. As the steady state discussion of the space-heterogeneous and space-homogeneous model reveals, we expect to obtain fundamental different convergence and stability results in both models. 

\paragraph{Numerics}
This paragraph is devoted to confirm the findings and conjectures, we achieved in the previous investigations.\\
In the steady state discussion of the space-homogeneous model, we obtained that the steady state densities are identical zero or in the null space of the collision operator. 
Figure \ref{Supp} clearly shows that the dynamics are steady as long the support of the density functions is in the null space of the collision operator. 
\begin{figure}[h!]
\begin{center}
\includegraphics[width=.8\textwidth]{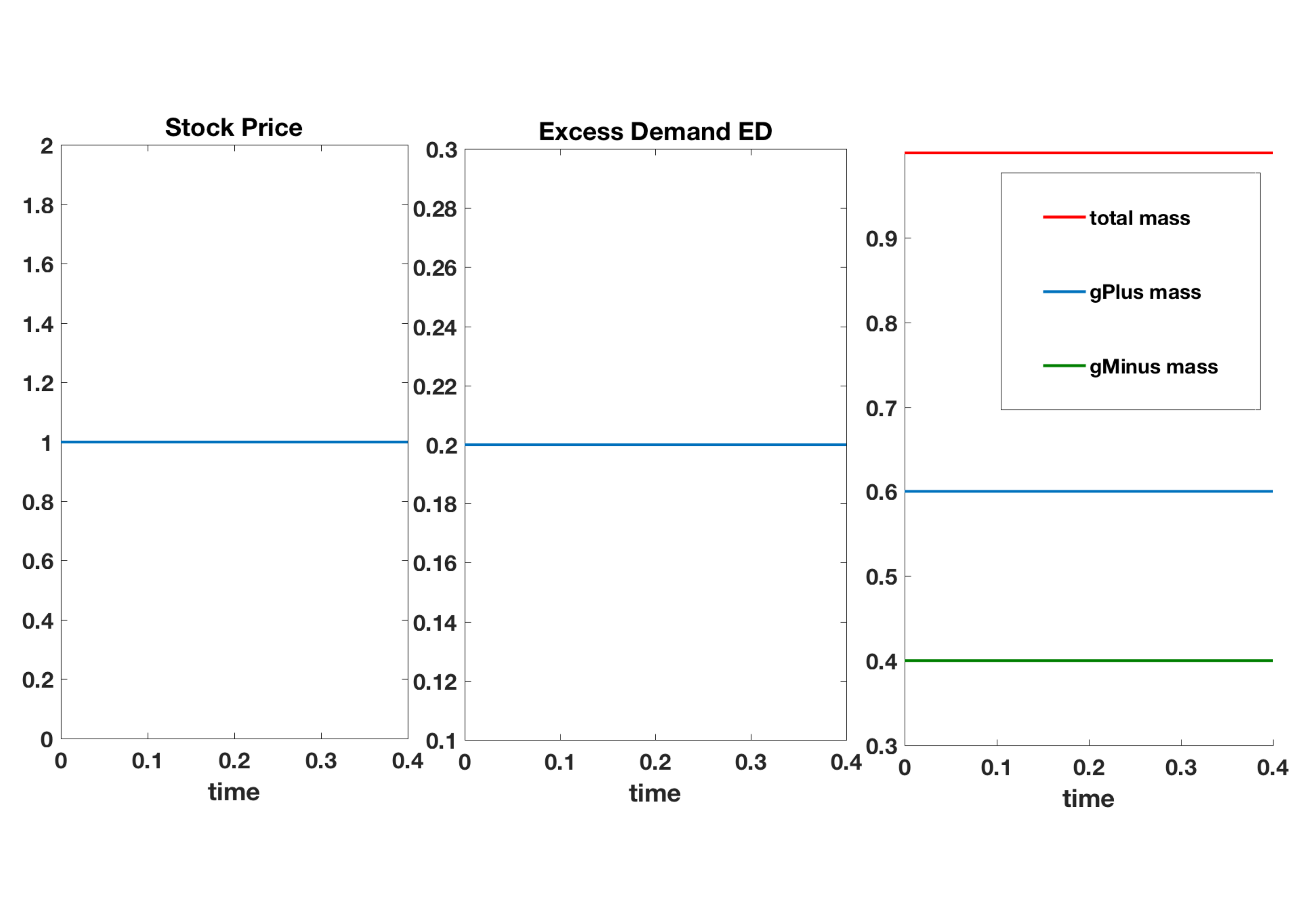}
\caption{Space-homogeneous deterministic mean field Cross model. The initial densities have their support in the null space of the collision operator. 
Further parameters are given in table \ref{ParMFCross}.}\label{Supp}
\end{center}
\end{figure}

\begin{figure}[h!]
\begin{center}
\includegraphics[width=1\textwidth]{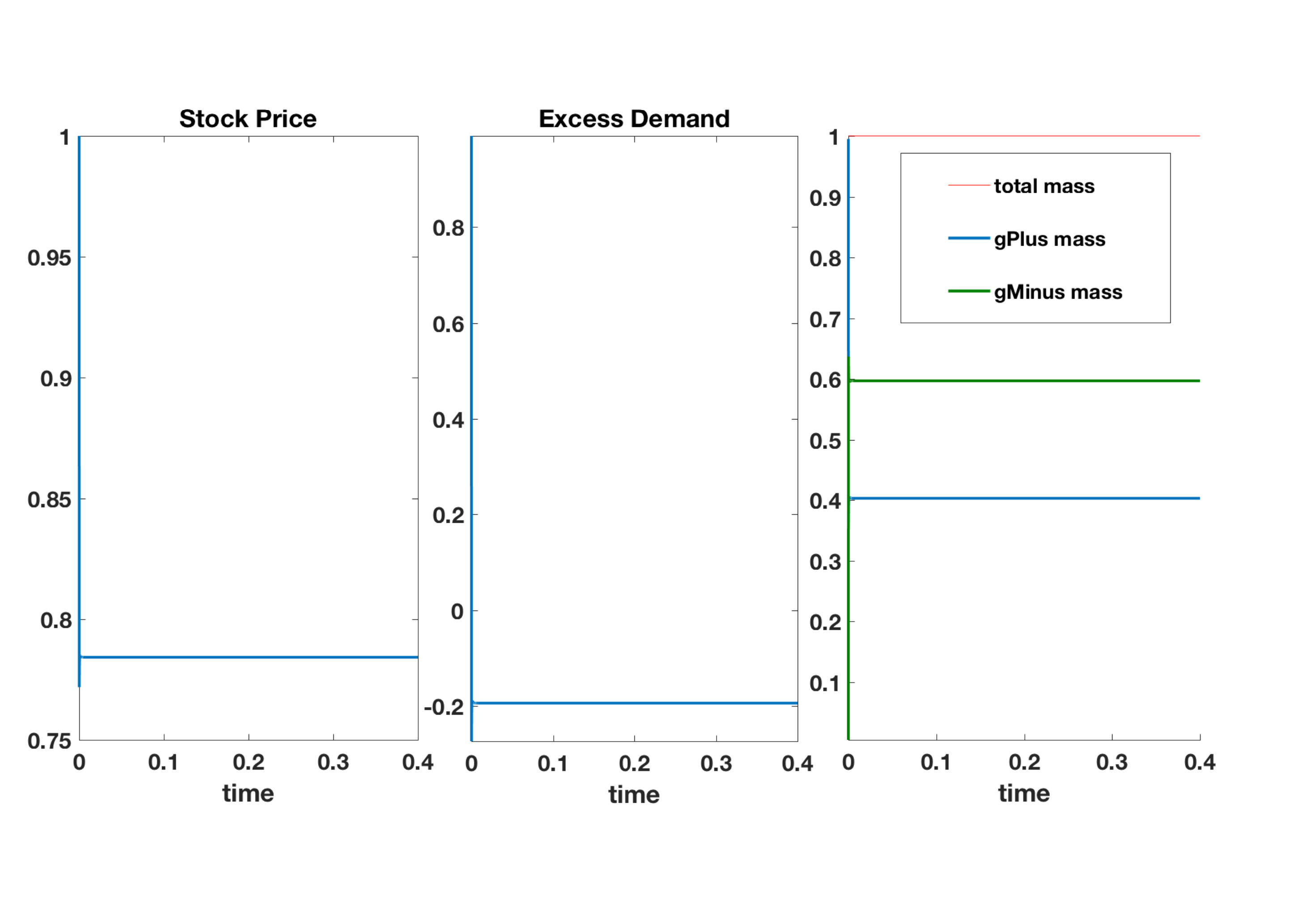}
\caption{Space-homogeneous deterministic mean field Cross model. The initial excess demand is given by $ED[g^+_0, g^-_0](0)=0.99$.
Further parameters are given in table \ref{ParMFCross}.}\label{Stability}
\end{center}
\end{figure}
The figures \ref{ODE1} and \ref{Stability} indicate the convergence of the solutions $g^+,g^-$ to the steady states $b)-ii)$ or $\ c)-ii)$ for general initial data. 
Interestingly, we see in figure \ref{Stability} a convergence to the steady state of type $b)-ii)$ although the initial mass of $g^-$ is close to zero. Thus, we conjecture that $b)-ii), c)-ii)$ are stable steady states, whereas $b)-i),\ c)-i)$ are unstable steady states. A proper proof of this numerical observation is left open for further research. \\ \\

The steady-state analysis of the space-heterogeneous model has shown that the excess demand can only reach the values $\{-1,0,1\}$ in equilibrium. 
Our deterministic simulations, visualized in figures \ref{ODE2}, \ref{StabH} and \ref{SuppH}, reveal that the excess demand always converges to the extreme values $\{1,-1\}$.
This is even the case if the initial densities have their support in the null space $\mathit{N}(Q)(t,S)$, see figure \ref{SuppH}.
This observation coincides with the steady states in $B)$ or $C)$ and again the initial values determine the convergence to one or another.
Furthermore, the results in figure \ref{StabH} indicate the stability of the steady states $B)$, respectively $C)$, in comparison to $A)$.
\begin{figure}[h!]
\begin{center}
\includegraphics[width=.8\textwidth]{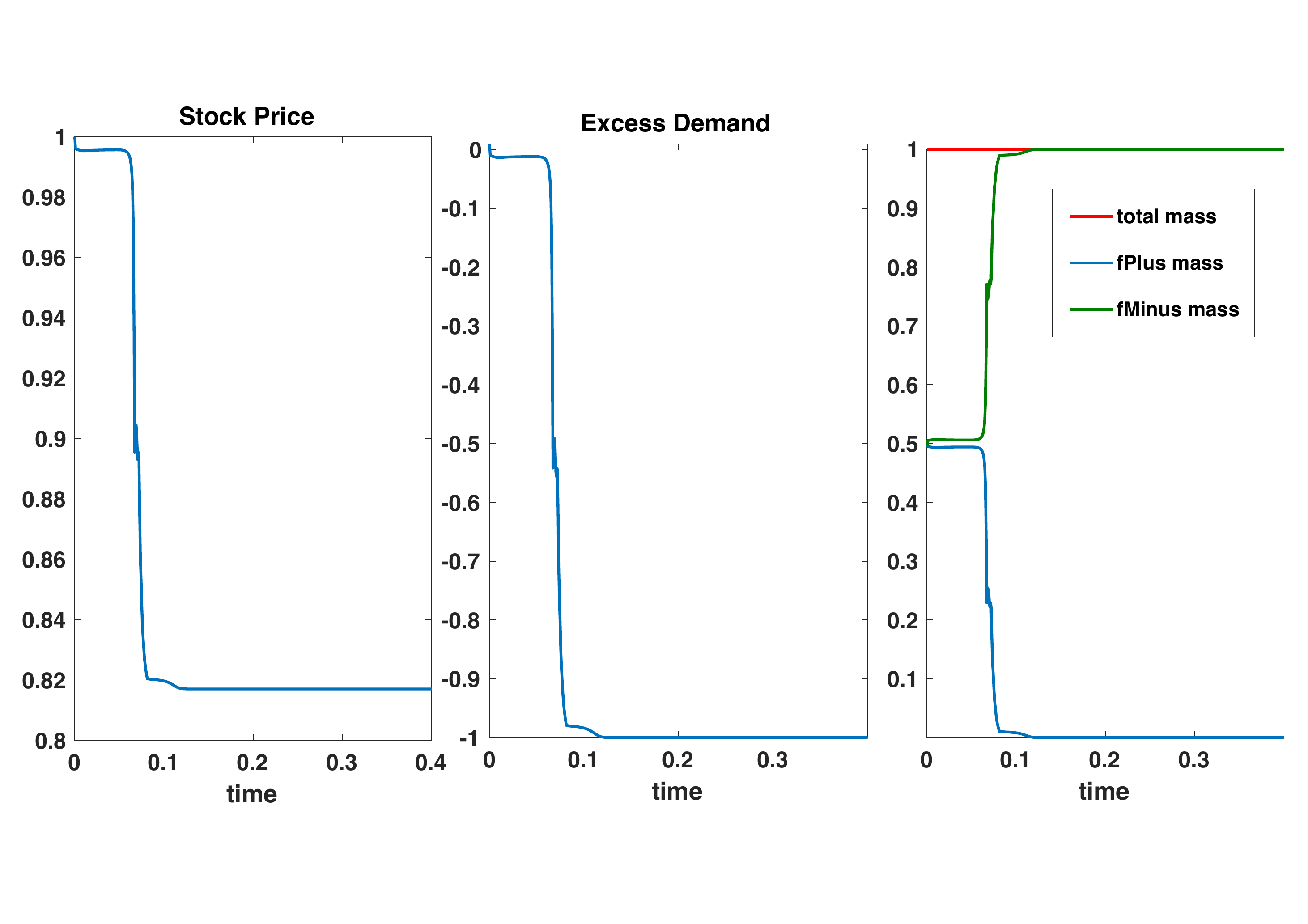}
\caption{Space-heterogeneous deterministic mean field Cross model. The initial excess demand is given by $ED[f^+_0,f^-_0](0) = 0.01$.
Further parameters are given in table \ref{ParMFCross}.}\label{StabH}
\end{center}
\end{figure}

\begin{figure}[h!]
\begin{center}
\includegraphics[width=1\textwidth]{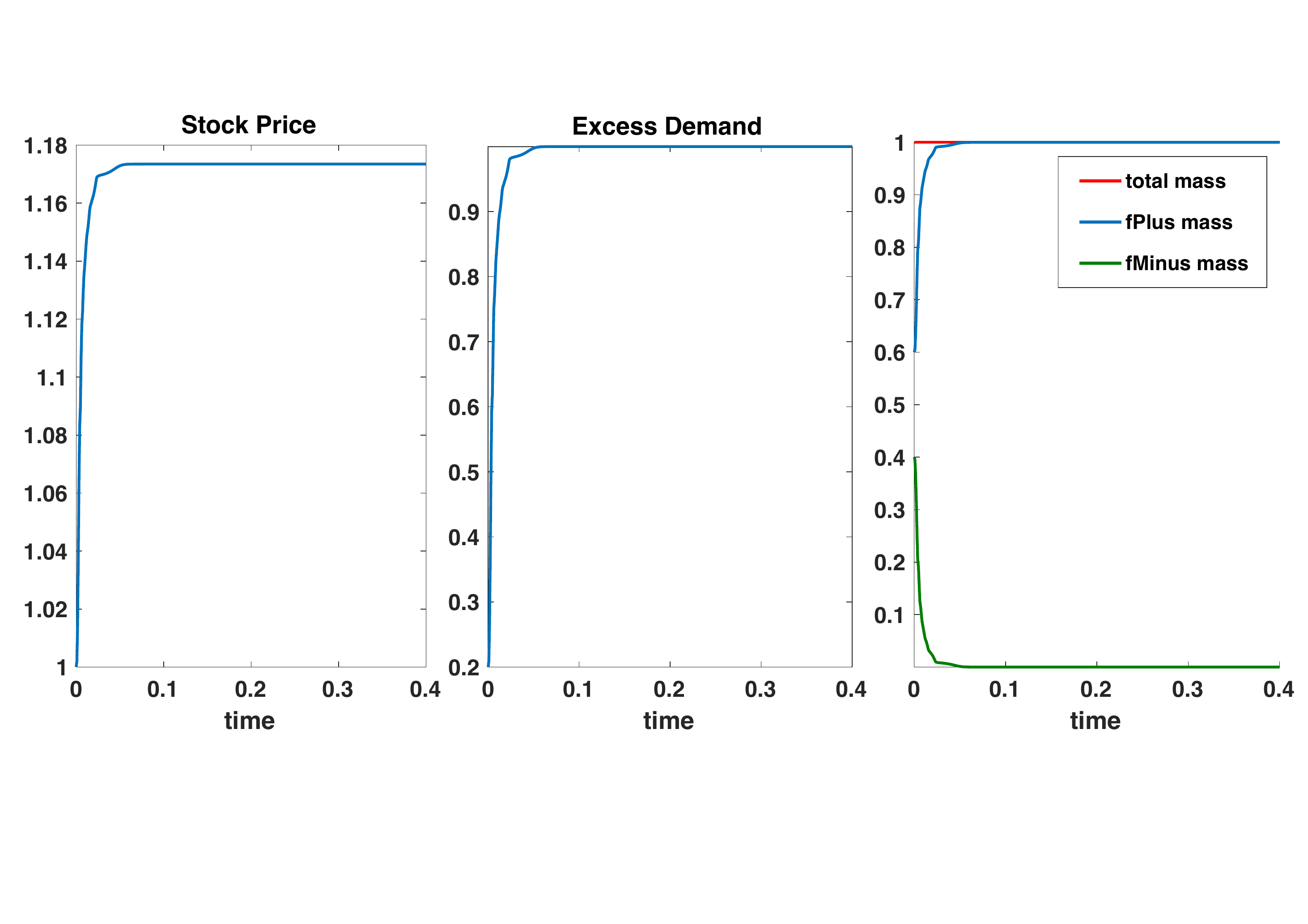}
\caption{Space-heterogeneous deterministic mean field Cross model. The initial densities have their support in the null space of the collision operator. 
For further parameter settings we refer to table \ref{ParMFCross}.}\label{SuppH}
\end{center}
\end{figure}

\section{Conclusion and Outlook}
We have introduced a kinetic model and have shown that the model is a good approximation of the original Cross model at least on a qualitative level. We have derived the continuum and mean field limit of the particle model and have obtained the mean field Cross model.
Our numerical investigations have revealed that the mean field Cross model exhibits identical characteristics as the original econophysical Cross model. The appearance of fat-tails is a direct consequence of the herding pressure. In the space-homogeneous case, where only the inaction pressure was active, we observe Gaussian behavior of stock returns. Interestingly, we only obtain volatitlity clustering in the space-heterogeneous model if we add the dependency of the diffusion function on the excess demand. \absatz
Furthermore, we have analyzed the PDE system with respect to steady states. We have shown that in the space-homogeneous case the excess demand can take various values compared to the heterogeneous case. In the heterogeneous case, the excess demand can only reach $\{-1,0,1\}$ in the equilibrium case. Hence, we conclude that this behavior of the deterministic skeleton explains the oscillatory behavior of the stochastic model. In addition, we could derive entropy bounds for the space-homogeneous and space-heterogeneous model as well. \absatz
We want to briefly discuss the advantages of the mean field Cross model compared to the microscopic Cross model. 
The PDE-SDE system most obviously enables us to do mathematical analysis, e.g. we could study the steady states of the deterministic PDE-ODE model.
Furthermore, we have gained a reduction of dimensions. Thus, instead of considering $N$ agents separately, we only consider two three-dimensional distribution functions. This can also be observed in the reduction of numerical complexity of the mean field model compared to the original Cross model. Of course we have to assume a large number of agents.
Finally, we also want to point out the reduction of parameters of the SDE-PDE system in contrast to the original microscopic model. \absatz
Further research directions are to quantify the influence of several parameters such as the market depth $\kappa$ on the statistical properties of the stock price. A sensitivity analysis or stochastic collocation might be performed to do uncertainty quantification. In addition, one might want to solve the inverse problem and fit several parameters to original stock price data. 
Again the advantages of the Cross model become obvious since we have reduced the number of unknowns remarkably.
Finally, we want to point out the possibility to extend the analysis of the model. Thus, the questions of existence, uniqueness 
and asymptotic convergence remain open. 

\section*{Acknowledgement}
The first-named author was supported by the Hans-Böckler-Stiftung. 

\clearpage
\appendix

\section{Appendix}

\subsection{Numerics}\label{Parameter}

\begin{table}[h!]
\begin{center}
\begin{tabular}{|c||c|}
\hline
Parameter & Value\\
\hline
\hline
$\kappa$ & $0.2$\\
\hline
$A_1$& $0.1$\\
\hline 
$A_2$ & $0.3$\\
\hline
$b_1$ & $25$\\
\hline
$b_2$ & $100$\\
\hline
$\Delta t$ &$ 4\cdot 10^{-5} $\\
\hline
$N$ & $1000$\\
\hline
Time Interval & $[0,0.4]$\\
\hline 
\end{tabular} 
\hspace{0.1cm}
\begin{tabular}{|c||c|}
\hline
 Variable & Initial Value\\
\hline
\hline
$\alpha$ & $0$ or $2$\\
\hline
$S(0)$ & $1$\\
\hline
$\gamma_i(0)$& $\gamma_i(0)=1,\ 1\leq i\leq 667,\ \gamma_i(0)=-1,\ 668\leq i\leq N$\\
\hline 
$ED(0)$& $\frac{1}{N}\sum\limits_{i=1}^N \gamma_i(0)$\\
\hline
$c_i(t)$ & $B_1,\ \forall 1\leq i\leq N$\\
\hline
$m_i(t)$ & $S(0),\ \forall 1\leq i\leq N$\\
\hline
\end{tabular}
\caption{Parameter settings of the original Cross model.}\label{ParCrossO}
\end{center}
\end{table}

\begin{table}[h!]
\begin{center}
\begin{tabular}{|c||c|}
\hline
Parameter & Value\\
\hline
\hline
$\kappa$ & $0.2$\\
\hline
$A_1$& $0.1$\\
\hline 
$A_2$ & $0.3$\\
\hline
$b_1$ & $25$\\
\hline
$b_2$ & $100$\\
\hline
$\Delta t$ &$ 4\cdot 10^{-5} $\\
\hline
$N$ & $30.000$\\
\hline
Time Interval & $[0,0.4]$\\
\hline 
\end{tabular} 
\hspace{0.1cm}
\begin{tabular}{|c||c|}
\hline
 Variable & Initial Value\\
\hline
\hline
$\alpha$ & $0$ or $2$\\
\hline
$\lambda_1,\lambda_2$ & $\lambda_1=\lambda_2=0.5$ or $\lambda_1=0,\ \lambda_2=1$\\
\hline
$S(0)$ & $1$\\
\hline
$\gamma_i(0)$& $\gamma_i(0)=1,\ 1\leq i\leq 667,\ \gamma_i(0)=-1,\ 668\leq i\leq N$\\
\hline 
$ED(0)$& $\frac{1}{N}\sum\limits_{i=1}^N \gamma_i(0)$\\
\hline
$c_i(t)$ & $B_1,\ \forall 1\leq i\leq N$\\
\hline
$m_i(t)$ & $S(0),\ \forall 1\leq i\leq N$\\
\hline
\end{tabular}
\caption{Parameter setting of the kinetic particle model.}\label{ParKin}
\end{center}
\end{table}

\begin{table}[h!]
\begin{center}
\begin{tabular}{|c||c|}
\hline
Parameter & Value\\
\hline
\hline
$\kappa$ & $0.2$\\
\hline
$A_1$& $0.1$\\
\hline 
$A_2$ & $0.3$\\
\hline
$b_1$ & $25$\\
\hline
$b_2$ & $100$\\
\hline
$\Delta t$ &$ 4\cdot 10^{-5} $\\
\hline
$N_c,N_m$ grid points & $400$\\
\hline
Time Interval & $[0,0.4]$\\
\hline 
\end{tabular} 
\hspace{0.5cm}
\begin{tabular}{|c||c|}
\hline
 Variable & Initial Value\\
\hline
\hline
$\alpha$ & $0$ or $2$\\
\hline
$\lambda_1,\lambda_2$ & $\lambda_1=\lambda_2=0.5$ \\
\hline
$S(0)$ & $1$\\
\hline
$ED(0)$& $\int f^+(0,m,c)-f^-(0,m,c)\ dmdc$\\
\hline
$f^+(0,m,c)$ & $ \Unif(M_1,m_4) \times \Unif(B_1,B_2)$\\
\hline
$f^-(0,m,c)$ &  $ \Unif(M_1,m_4) \times \Unif(B_1,B_2)$\\
\hline
\end{tabular}
\caption{Parameter settings of the mean field Cross model.}\label{ParMFCross}
\end{center}
\end{table}
\clearpage

\subsection{Qualitative Studies} \label{AppendixAnal}

We give the following definitions for space and velocity dependent distribution functions. They can be immediately transferred into the space-homogeneous setting.

\begin{definition}
Given any function $\phi(v,x),\ x,v\in\R^d$ and a density function $f(t,v,x),\ x,v\in\R^d$.
Then we call the average value with respect to the function $\phi(\cdot)$, an observable.
\begin{align*}
\langle  \phi(v,x), f(t,v,x)\rangle := \int\limits_{\R^d\times \R^d} \phi(v,x)\ f(t,v,x)\ dvdx.
\end{align*}
\end{definition}

\begin{definition}
We call the function $\psi(v,x)\in\R^n, \ x,v\in\R^d,\ n,d\in\N$ a collision invariant of the kinetic equation
\begin{align*}
\partial_t f(t,v,x)+ \nabla_x(G[f](t,v,x))=Q[f](t,v,x),
\end{align*} 
where $f: [0,\infty)\times \R^d \times \R^d\to \R^n$ is the density function,  $G[f](t,v,x)\in\R^n$ the flux and $Q[f](t,v,x)$ the collision operator, if
\begin{align*}
\int\limits_{\R^d}\psi(v,x)\cdot Q[f](t,v,x)\ dvdx =0, 
\end{align*}
holds for all functions $f$.
Furthermore, we call all observables of the kinetic density with respect to any collision invariant
\begin{align*}
\langle \psi(v,x), f(t,v,x)\rangle = \int\limits_{\R^d}\psi(v,x)\ Q[f](t,v,x) \ dvdx, 
\end{align*}
a \textbf{conserved quantity}. 
\end{definition}

\begin{remark}
Due to our growth assumption on the densities $f^+,f^-$ we get:
\begin{footnotesize}
\begin{align*}
&\partial_t \int\limits_{\R\times \R}  \phi_1(m,c)\ f^+(t,m,c)+ \phi_2(m,c)\ f^-(t,m,c)\  dmdc =\\
&\quad \quad  \int\limits_{\R\times \R}  \phi_1(m,c)\ \left( Q_{gain}[f^-](t,m,c,S)-Q_{loss}[f^+](t,m,c,S)\right)\ dmdc \\
 &\quad \quad +\int\limits_{\R\times \R} \phi_2(m,c)\ \left( Q_{gain}[f^+](t,m,c,S)-Q_{loss}[f^-](t,m,c,S)\right)\ dmdc.
\end{align*}
If $\phi_1,\phi_2$ are collision invariants we have:
\begin{align*}
&\partial_t \int\limits_{\R\times \R}  \phi_1(m,c)\ f^+(t,m,c)+ \phi_2(m,c)\ f^-(t,m,c) \ dmdc =0.
\end{align*}
Thus, the conserved quantities are constant in time, which reveals the motivation of their name.
\end{footnotesize}
\end{remark}

\begin{theorem}
All collision invariants of our homogeneous model are given by
\begin{align*}
\psi_1(m)=c_1,\\
\psi_2(m)=c_2,
\end{align*}
and in the space-heterogeneous case we get:
\begin{align*}
\psi_1(m,c)=c_1,\\
\psi_2(m,c)=c_2,
\end{align*}
where $c_1,c_2, \in\R$ are constants. Thus, the only conserved quantity of our system is the mass, respectively the number of agents. 
\end{theorem}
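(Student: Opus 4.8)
The plan is to characterise the collision invariants directly from the defining functional identity $\int \psi\cdot Q[f] = 0$, exploiting that it must hold for \emph{every} admissible pair $(f^+,f^-)$, and then to read off constancy from the explicit support of the switching rate. I would begin with the space-homogeneous case. Writing $\psi=(\psi_1,\psi_2)$ and using the gain/loss operators from \eqref{homoModel}, the Dirac factor $\delta(m-S)$ in each gain term evaluates the test function at the re-emission point $m=S$. Collecting the contributions of $f^+$ and $f^-$ separately, the invariant condition becomes
\begin{equation*}
\int_{\R} f^+(t,m)\,\lambda_h(t,m,S)\,\big(\psi_2(S)-\psi_1(m)\big)\,dm
+\int_{\R} f^-(t,m)\,\lambda_h(t,m,S)\,\big(\psi_1(S)-\psi_2(m)\big)\,dm=0 .
\end{equation*}

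Since the two densities are independent and range over the Young measures $Y$ (in particular they may be taken to be Dirac masses at arbitrary points), each integrand must vanish identically on the set where the rate is positive, giving the pointwise relations
\begin{equation*}
\psi_1(m)=\psi_2(S)\quad\text{and}\quad \psi_2(m)=\psi_1(S)\qquad\text{whenever }\lambda_h(t,m,S)>0 .
\end{equation*}
Here I would use the closed form $\lambda_h=q/\Delta t_C$: from the definition of $q$ one has $\lambda_h(m,S)=0$ exactly on the inaction interval $m\in\big(S/(1+A_1),\,S(1+A_1)\big)$, so for \emph{every} $m>0$ there is a stock price $S$ (any sufficiently small $S$) with $\lambda_h(m,S)>0$. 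Feeding such an $S$ into the first relation shows that the value $\psi_1(m)$ cannot depend on $m$, and the analogous argument gives the same for $\psi_2$; an overlap argument over two choices of $S$ whose positive sets intersect fixes the common value unambiguously. Thus $\psi_1\equiv c_1$ and $\psi_2\equiv c_2$ are constant, and the two coupling relations identify the constants, so that the only conserved observable is a multiple of $\int_{\R}(f^++f^-)\,dm$, i.e.\ the total mass, respectively the number of agents.

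For the space-heterogeneous model the argument is structurally identical, now with test functions $\psi_i(m,c)$ and the re-emission point $(m,c)=(S,0)$, and the same regrouping yields $\psi_1(m,c)=\psi_2(S,0)$ and $\psi_2(m,c)=\psi_1(S,0)$ on $\{\lambda>0\}$. The extra ingredient is the transport term $\partial_c\big(H(\mp ED)\,f^{\pm}\big)$: tested against $\psi_i$ and integrated, it is a total $c$-derivative and vanishes by the Dirichlet/decay boundary conditions, so it does not enter the invariant condition. To propagate constancy over the whole $(m,c)$-plane I would use that $\lambda=\lambda_1\,p(c)+\lambda_2\,q(m,S)$ with $\lambda_1>0$ and $p(c)>0$ for $c>B_1$; hence $\lambda>0$ for all $m$ and $S$ once $c>B_1$, covering that half-plane immediately, while the $q$-part (with a suitable $S$) covers the remaining strip $c\le B_1$. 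Combining the two regions gives $\psi_1,\psi_2$ constant on all of $\R^2$, again reducing the conserved quantity to the mass.

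I expect the main obstacle to be the rigorous justification of the delta-function manipulations and of the \emph{arbitrary} $f$ step when $f$ is only a Young measure rather than a smooth density: one must exhibit a rich enough family of admissible test states (Dirac masses suffice) to conclude that the bracketed coefficients vanish \emph{pointwise} on $\{\lambda>0\}$, and then run the covering-in-$S$ argument carefully so that constancy extends across the inaction interval where the rate degenerates to zero. A secondary but important point is the bookkeeping of the constants: although each $\psi_i$ is individually constant, the coupling $\psi_1(m)=\psi_2(S)$ ties $c_1$ and $c_2$ together, which is exactly why the per-group counts $\int f^{\pm}$ are \emph{not} separately conserved and only their sum is.
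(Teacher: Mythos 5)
Your proposal is correct and follows essentially the same route as the paper's proof: evaluate the Dirac factors at the re-emission point, regroup the gain/loss contributions into integrals of $\lambda f^{\pm}$ against $\psi_2(S)-\psi_1(m)$ and $\psi_1(S)-\psi_2(m)$, and use the arbitrariness of $f^+,f^-$ to force these brackets to vanish pointwise. The one place you go beyond the paper is worth keeping: the paper invokes the fundamental lemma of the calculus of variations as if $\lambda$ were strictly positive everywhere, whereas you correctly note that the pointwise relations are only obtained on $\{\lambda>0\}$ and close the gap on the inaction set by varying $S$ (and, in the heterogeneous case, using $p(c)>0$ for $c>B_1$); you also make explicit that the coupling forces $c_1=c_2$, which is what reduces the conserved quantities to the total mass.
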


\begin{proof}
We perform the proof for the space-heterogeneous setting, but one can translate the results one to one to the one dimensional case. 
The functions $\psi_1(m,c), \psi_2(m,c)$ have to satisfy:
\begin{align*}
&\int\limits_{\R\times \R}  \psi_1(m,c)\ \left( Q_{gain}[f^-](t,m,c,S)-Q_{loss}[f^+](t,m,c,S)\right)\ dmdc +\\
&\quad\quad +\int\limits_{\R\times \R}  \psi_2(m,c)\ \left( Q_{gain}[f^+](t,m,c,S)-Q_{loss}[f^-](t,m,c,S)\right)\ dmdc=0
\end{align*}
This is equivalent to
\begin{align*}
&\psi_1(S,0)\ \int\limits_{\R\times \R} \lambda(t,c,m,S)\ f^-(t,m,c)\ dmdc-\int\limits_{\R^2} \psi_1(m,c)\ \lambda(t,c,m,S)\ f^+(t,m,c)\ dmdc+ \\
& +\psi_2(S,0)\ \int\limits_{\R\times \R} \lambda(t,c,m,S)\ f^+(t,m,c)\ dmdc-\int\limits_{\R\times\R} \psi_2(m,c)\ \lambda(t,c,m,S)\ f^-(t,m,c)\ dmdc=0
\end{align*}
This equation can be rewritten.
\begin{align*}
&\int\limits_{\R\times \R} \lambda(t,c,m,S)\ f^-(t,m,c)\ \left( \psi_1(S,0)-\psi_2(m,c)\right)\ dm dc\\
&+ \int\limits_{\R\times \R} \lambda(t,c,m,S)\ f^+(t,m,c)\ \left(\psi_2(S,0)-\psi_1(m,c)\right)\ dm dc=0.
\end{align*}
The previous equation has to hold for all functions $f$.
Thus, by the lemma of variational calculus, we can conclude that 
\begin{align*}
\psi_1(S,0)=\psi_2(m,c),\\
\psi_2(S,0)=\psi_1(m,c),
\end{align*} 
has to hold. Hence, we define $c_2:=\psi_1(S,0)$ and $c_1:=\psi_2(S,0)$ and the proof is completed. 
\end{proof}

The proof of theorem \ref{entropy} is given by:
\begin{proof}
A straightforward computations shows.
\begin{footnotesize}
\begin{align*}
&\partial_t \left[ \psi^+(t,m)\ p^+(t,m)\ K\left(\frac{g^+(t,m)}{p^+(t,m)}\right)\right] +\partial_t \left[ \psi^-(t,m)\ p^-(t,m)\ K\left(\frac{g^-(t,m)}{p^-(t,m)}\right)\right]\\
& \quad +\psi^+(t,S) p^-(t,m)\ K\left(\frac{g^-(t,m)}{p^-(t,m)}\right)\ \lambda^h(m,S)\\
& \quad -\psi^+(t,m)\ \delta(m-S)\ \int\limits_{\R} \lambda^h(t,m^{\prime})\ p^-(t,m^{\prime})\ K\left(\frac{g^-(t,m^{\prime})}{p^-(t,m^{\prime})}\right)\ dm^{\prime}\\
& \quad +\psi^-(t,S) p^+(t,m)\ K\left(\frac{g^+(t,m)}{p^-(t,m)}\right)\ \lambda^h(m,S)\\
&\quad -\psi^-(t,m)\ \delta(m-S)\ \int\limits_{\R} \lambda^h(t,m^{\prime})\ p^+(t,m^{\prime})\ K\left(\frac{g^+(t,m^{\prime})}{p^+(t,m^{\prime})}\right)\ dm^{\prime}\\
&= \psi^+(t,m)\ \delta(m-S)\ \int\limits_{\R} \lambda^h(m^{\prime},S)\ p^-(t,m^{\prime}) \Bigg( \left[ K\left( \frac{g^+(t,m)}{p^+(t,m)}\right) -  K\left( \frac{g^-(t,m^{\prime})}{p^-(t,m^{\prime})}\right)  \right]\\
& \quad\quad\quad\quad\quad\quad\quad\quad\quad + K^{\prime}\left( \frac{g^+(t,m)}{p^+(t,m)}\right) \left[\frac{g^-(t,m^{\prime})}{p^-(t,m^{\prime})}  - \frac{g^+(t,m)}{p^+(t,m)}\right]   \Bigg)\ dm^{\prime}\\
&+ \psi^-(t,m)\ \delta(m-S)\ \int\limits_{\R} \lambda^h(m^{\prime},S)\ p^+(t,m^{\prime}) \Bigg( \left[ K\left( \frac{g^-(t,m)}{p^-(t,m)}\right) -  K\left( \frac{g^+(t,m^{\prime})}{p^+(t,m^{\prime})}\right)  \right]\\
& \quad\quad\quad\quad\quad\quad\quad\quad\quad + K^{\prime}\left( \frac{g^-(t,m)}{p^-(t,m)}\right) \left[\frac{g^+(t,m^{\prime})}{p^+(t,m^{\prime})}  - \frac{g^-(t,m)}{p^-(t,m)}\right]   \Bigg)\ dm^{\prime}.
\end{align*}
Then we integrate over $m$ and get.
\begin{align*}
&\frac{d}{dt} \int\limits_{\R}  \psi^+(t,m)\ p^+(t,m)\ K\left(\frac{g^+(t,m)}{p^+(t,m)}\right)\ dm+\frac{d}{dt} \int\limits_{\R}  \psi^-(t,m)\ p^-(t,m)\ K\left(\frac{g^-(t,m)}{p^-(t,m)}\right)\ dm\\
&= \psi^+(t,S)\  \int\limits_{\R} \lambda^h(m^{\prime},S)\ p^-(t,m^{\prime}) \Bigg( \left[ K\left( \frac{g^+(t,S)}{p^+(t,S)}\right) -  K\left( \frac{g^-(t,m^{\prime})}{p^-(t,m^{\prime})}\right)  \right]\\
& \quad\quad\quad\quad\quad\quad\quad\quad\quad + K^{\prime}\left( \frac{g^+(t,S)}{p^+(t,S)}\right) \left[\frac{g^-(t,m^{\prime})}{p^-(t,m^{\prime})}  - \frac{g^+(t,S)}{p^+(t,S)}\right]   \Bigg)\ dm^{\prime}\\
&+ \psi^-(t,S)\ \int\limits_{\R} \lambda^h(m^{\prime},S)\ p^+(t,m^{\prime}) \Bigg( \left[ K\left( \frac{g^-(t,S)}{p^-(t,S)}\right) -  K\left( \frac{g^+(t,m^{\prime})}{p^+(t,m^{\prime})}\right)  \right]\\
& \quad\quad\quad\quad\quad\quad\quad\quad\quad + K^{\prime}\left( \frac{g^-(t,S)}{p^-(t,S)}\right) \left[\frac{g^+(t,m^{\prime})}{p^+(t,m^{\prime})}  - \frac{g^-(t,S)}{p^-(t,S)}\right]   \Bigg)\ dm^{\prime}.
\end{align*}
\end{footnotesize}
Thanks to the convexity of $K$ the inequality 
$$
0\geq K(y)-K(x)+K^{\prime}(y)\ (x-y),
$$
holds for any differentiable function $K$. Thus, the right hand side is negative and the entropy inequality \eqref{entropyIn} holds.
\end{proof}

\clearpage

	\bibliographystyle{abbrv}	
	\bibliography{literaturmean.bib}
\end{document}